\newcommand{\NP}{NP}
\newcommand{\ea}{$(1+1)$~\mbox{EA}\xspace}
\newcommand{\gsemo}{\mbox{\textsc{Gsemo}}\xspace}
\newcommand{\bigo}[1]{\mathcal{O}\left(#1 \right)}
\newcommand{\ds}{\textsf{MDS}\xspace}
\newcommand{\vc}{\textsf{MVC}\xspace}
\newcommand{\is}{\textsf{MIS}\xspace}
\newcommand{\cds}{\textsf{CDS}\xspace}
\newcommand{\degr}{\delta}
\newcommand{\opt}{\textsc{opt}\xspace}
\newcommand{\app}{$\epsilon$-approximation\xspace}
\newcommand{\absl}[1]{\left \lvert #1 \right \rvert}
\newcommand{\optd}{D}
\newcommand{\optc}{C}
\newcommand{\optm}{M_*}
\newcommand{\ignore}[1]{}
\newcommand{\abeq}{\begin{equation} a  := \frac{\beta - 1}{\beta - 2} \left [1 - \left (\frac{t + 2}{t + 1}\right )^{1 - \beta} \right ]^{-1} \quad \mbox{and}\quad b := \left [ 4 c_1 \frac{(t + 1)^{\beta - 1}}{\beta - 1} \right ]^{\frac{1}{\beta - 2}}. \nonumber \end{equation}}
\DeclareMathOperator*{\argmax}{arg\,max}
\DeclareRobustCommand{\PLBU}{PLB\xspace}
\renewenvironment{proof}{\paragraph{Proof}}{\hfill$\square$\vspace{10pt}\\}
\begin{document}

\title{Approximating Optimization Problems using EAs on Scale-Free Networks}

\titlerunning{Approximating Optimization Problems using EAs on Scale-Free Networks} 

\author{Ankit Chauhan \and Tobias Friedrich \and Francesco Quinzan}

\authorrunning{Chauhan et al.}
\tocauthor{Ivar Ekeland, Roger Temam, Jeffrey Dean, David Grove,
Craig Chambers, Kim B. Bruce, and Elisa Bertino}
\institute{Hasso Plattner Institute, Potsdam, Germany}

\maketitle         

\begin{abstract}
It has been observed that many complex real-world networks have certain properties, such as a high clustering coefficient, a low diameter, and a power-law degree distribution. A network with a power-law degree distribution is known as \emph{scale-free network}. In order to study these networks, various random graph models have been proposed, e.g. Preferential Attachment, Chung-Lu, or Hyperbolic.

We look at the interplay between the power-law degree distribution and the run time of optimization techniques for well known combinatorial problems. We observe that on scale-free networks, simple evolutionary algorithms (EAs) quickly reach a constant-factor approximation ratio on common covering problems

We prove that the single-objective \ea reaches a constant-factor approximation ratio on  the Minimum Dominating Set problem, the Minimum Vertex Cover problem, the Minimum Connected Dominating Set problem, and the Maximum Independent Set problem in expected polynomial number of calls to the fitness function.

Furthermore, we prove that the multi-objective \gsemo algorithm reaches a better approximation ratio than the \ea on those problems, within polynomial fitness evaluations.

\keywords{Evolutionary algorithms, covering problems, power-law bounded networks.}
\end{abstract}
%
%
\section{Introduction}
Bio-inspired randomized search heuristics, such as evolutionary algorithms, are well-suited to approach combinatorial optimization problems. These algorithms have been extensively analyzed on artificial pseudo-Boolean functions (see e.g. Droste et al. \cite{DJW02}, and Jansen and Wegener \cite{JW01}) as well as on some combinatorial optimization problems (see e.g. Giel and Wegener \cite{OW03}, Neumann \cite{Neumann07}, and Neumann and Wegener \cite{NW07}). The standard approach to perform theoretical analyses is the average-case black-box complexity: The run time is estimated by counting the expected number of calls to the valuation oracle or \emph{fitness function}. In recent years, a significant effort has been made to study the run time of EAs on combinatorial optimization problems.

The performance of single-objective population-based EAs on the Minimum Vertex Cover problem is studied in Oliveto et al. \cite{OHY08,OHY07}. They show that these algorithms achieve arbitrarily bad approximation guarantee on this problem. In contrast, Friedrich et al.~\cite{FHN07} prove that multi-objective EAs can obtain an optimal solution in expected polynomial time on this instance.

B\"{a}ck et al. \cite{BK94} study the performance of a simple a single-objective evolutionary algorithm for the Maximum Independent set problem and claim its superiority by experimental observations. Similarly, Peng \cite{Peng15} analyzes a single objective EA on this problem and proves that it reaches a $\left ( (\Delta(G)+1)/2 \right )$-approximation within expected $\bigo{n^4}$ fitness evaluations, where $n$ is the number of nodes in a given graph $G$, and $\Delta(G)$ its maximum node degree. These analyses, as well as those for the Minimum Vertex Cover problem, do not require any assumptions on the topology of $G$.

\paragraph{Power-law bounded networks.} A wide range of real-world networks, such as the internet graph, the web, power grids, protein-protein interaction graphs, and social networks exhibit properties such as high clustering coefficient (see Kumar et al. \cite{KumarRRT99}), small diameter (see Leskovech et al. \cite{LKF05}), and approximate power-law degree distribution (see Faolutsos et al. \cite{FaloutsosFF99}). Various models have been proposed to capture, and possibly explain, these properties (see e.g. Newman \cite{Newman03} and Watts and Strogatz \cite{WS98}).

One of the most significant contributions in this sense is that of Brach et al. \cite{PLBNetworks}. Many real-world networks approximately exhibit a power-law degree distribution. That is, the number of nodes of degree $k$ is approximately proportional to $k^{-\beta}$, for $k$ sufficiently large, where $\beta>1$ is a constant inherent to the network. Networks with this property are commonly referred to as \emph{scale-free}. Brach et al. \cite{PLBNetworks} define a deterministic condition that captures the behavior of the degree distribution of such networks. This condition requires an upper-bound on the number of nodes with degree $\delta(v) \in \left[2^i, 2^{i+1}\right)$, for all $i = 0,\cdots, \lfloor \log (n)-1\rfloor+1$. This property is commonly referred to as \PLBU. Carrying out run time analysis of EAs on networks that exhibit this property may give a better understanding of their performance in practice. 

\ignore{
\begin{table}[t]
\centering 
\begin{tabular}{@{}rccccccccccl@{}}\toprule
& \multicolumn{2}{c}{\textbf{\PLBU networks}}     &\phantom{a}                & \multicolumn{2}{c}{\textbf{general graphs}}      \\ 
\cmidrule{2-3} \cmidrule{5-6}
\textbf{problem} & \textbf{run time} & \textbf{ratio} & & \textbf{run time} &\textbf{ratio} \\ \midrule
\ds & $\bigo{n \log n}$ & $\Theta(1)$ & & $\bigo{n^4}$ & $\Omega{(\ln n)}$~\cite{SW11} \\
\vc & $\bigo{n \log n}$ & $\Theta(1)$ & & $\bigo{n \log n}$ & $\Omega({n})$~\cite{FHN07} \\ 
\cds & $\bigo{n \log n}$ & $\Theta(1)$ & & $\bigo{n^4}$ & $\Omega{(\ln n)}$ \cite{RDJ04} \\ 
\is & $\bigo{n^4}$ &$ \Theta(1)$ & &$\bigo{n^4}$ & $\Omega({\Delta(G)})$~\cite{Peng15} \\ \hline
\end{tabular}
  \caption{Comparison of the performance of the \ea (cf. Algorithm \ref{1+1}) on \PLBU networks (Definition \ref{def:plbu}) with power-law exponent $\beta > 2$, and on general graphs. We prove that in expectation after $\bigo{n \log n}$ fitness evaluations, the \ea  reaches a constant-factor approximation ratio on the \PLBU graphs for \textsc{minimum vertex cover}, \textsc{minimum dominating set}, \textsc{connected dominating set}, and \textsc{maximum independent set}. On general graphs, the \ea reaches at least an $\Omega(\log n)$-approximation on those problems in expectation polynomial time. The \ea does not reach a constant-factor approximation in expected polynomial time on general graphs unless $\textsc{P}=\textsc{NP}$.}
  \label{table:1}
\end{table}
\begin{table}[t]\label{tab:GSEMO}

\centering

\begin{tabular}{@{}rccccccccccl@{}}\toprule 
& \multicolumn{2}{c}{\textbf{\PLBU networks}}     &\phantom{a}                & \multicolumn{2}{c}{\textbf{general graphs}}      \\ 
\cmidrule{2-3} \cmidrule{5-6} 
\textbf{problems} & \textbf{run time} & \textbf{ratio} & & \textbf{run time} &\textbf{ratio} \\ \midrule
\ds & $\bigo{n^3}$ & $\Theta(1)$ & & $\bigo{n^3}$ & $\bigo{\log n}$ \\
\vc & $\bigo{n^3}$ & $\Theta(1)$ & & $\bigo{n^3}$ & $\bigo{\log n}$ \cite{FHN07} \\ 
\cds & $\bigo{n^3}$ & $\Theta(1)$ & & $\bigo{n^3}$ & $\bigo{\log n}$ \\ 
\is & $\bigo{n^3}$ & $\Theta(1)$ & & $\bigo{n^3}$ & $\bigo{\Delta(G)}$ \\ \hline
\end{tabular}
  \caption{Comparison of the performance of  the \gsemo (see Algorithm~\ref{GSEMO}) on  \PLBU  networks (see Definition~\ref{def:plbu}) with power-law exponent $\beta > 2$, and on general graphs. We prove that in expected $\bigo{n^3}$ fitness evaluations,  the \gsemo reaches a constant-factor approximation ratio on the \PLBU graphs, for  \textsc{minimum vertex cover}, \textsc{minimum dominating set}, \textsc{connected dominating set}, and \textsc{maximum independent set}. On general graphs, the \gsemo reaches at most a $\bigo{\log n}$-approximation on those problems in expectation polynomial time.  The \gsemo does not reach a constant-factor approximation in expected polynomial time on general graphs unless $\textsc{P}=\textsc{NP}$.}
\label{table:2}
\end{table}
}
\paragraph{Our contribution.} We study various well-known \NP-hard graph covering problems, and prove that simple EAs perform well, if the underlying graph exhibits the \PLBU property. Specifically, we study the Minimum Dominating Set problem,  the Minimum Vertex Cover problem, the Maximum Independent Set problem, and the minimum Connected Dominating Set problem.

We prove that for these problems a single-objective EA reaches a constant-factor approximation ratio within expected polynomial fitness evaluations. We prove that  a multi-objective EA reaches a better approximation guarantee, given more time budget.

The paper is organized as follows. In Section \ref{sec:preliminaries} we define the algorithms and give some basic definitions and technical tools. In Section \ref{sec:ds} we study the Minimum Dominating Set problem, both in the single- and multi-objective case. The results for the Minimum Vertex Cover problem follows directly from the analysis for the Minimum Dominating Set problem and are presented in Section \ref{sec:ds}. The Minimum Connected Dominating Set problem is discussed in Section \ref{sec:MCDSs-objsec}. We conclude with the analysis for the Maximum Independent Set problem, in Section \ref{sec:mis}.
%
%
%
%
%
%
%
\section{Preliminaries}
\label{sec:preliminaries}

In this paper, we only study undirected graphs $G=(V, E)$ without loops, where $V$ is  the set of nodes and $E$ is the set of edges. We always use $n$ to denote $\absl{V}$ and $m$ to denote $\absl{E}$.  We use $\degr(v)$ to denote the degree of each node $v\in V$, and we let $\Delta(G)$ be the maximum degree of $G$. Moreover, given a pseudo-Boolean array $x$ we denote with $\absl{x}_1$ the number of ones in the input string. Otherwise, we use the well-known mathematical and graph theoretic notations, e.g., we use $\absl{A}$ to denote the size of any set $A$. We use the following definition of \app.
\begin{definition}
Consider a problem $\mathcal{P}$ on a graph $G=(V, E)$ let $U$ be a possible solution, and denote with $\opt$ an optimal solution of $\mathcal{P}$. If $\mathcal{P}$ is a minimization problem, we say that $U$ is an \app if it holds that $\absl{U}/\absl{\opt} \leq \epsilon$. If $\mathcal{P}$ is a maximization problem, we say that $U$ is an \app if it holds that $\absl{\opt}/\absl{U} \leq \epsilon$.
\end{definition}
We consider both single- and multi-objective optimization. Whereas in the single-objective case the algorithm searches for the global optimum, in the latter the algorithm searches for a set of optimal solutions. Such solutions are part of the \emph{Pareto front}. 

Consider any two points $x' =(x'_1, \dots, x'_m)$ and $x'' =(x''_1, \dots , x''_m)$ in the $m$-dimensional space $\mathbb{R}^m$. We say that $x'$ \emph{dominates} $x''$, in symbols $x' \succ x''$, if it holds that $x'_i \geq x''_i$ for  all $i = 1, \dots, m$. With this notion of dominance, we define the Pareto front as follows.
\begin{definition}
\label{def:pareto_optimum}
Consider a function $f: X \subseteq \mathbb{R}^n \to \mathbb{R}^m$, with $X$ being a compact set in the metric space $\mathbb{R}^n$. Consider the set $Y:= \{  y \in \mathbb{R}^m \mid y = f(x), x \in X\}$, and denote with $\succ$ the standard partial order on $Y$. The Pareto front is defined as $\mathcal{P}(Y) = \{ y' \in Y \mid \{ y'' \succ y', y'' \neq y' \} = \emptyset \}$.
\end{definition}

\subsection{Algorithms}

The \ea is a randomized local search heuristic inspired by the process of biological evolution (cf. Algorithm \ref{1+1}). 
\begin{algorithm}[t]
	\caption{The \ea.}
    \label{1+1}
    \textbf{input:} a fitness function $f:2^V \rightarrow \mathbb{R}_{\geq 0}$\;
    \textbf{output:} an (approximate) global maximum of the function $f$\;
    $\qquad $\\
    // sample initial solution\\
 	choose $x\in \{0, 1\}^n$ uniformly at random\;
	$\qquad $\\
 	\While{convergence criterion not met}{
        $\qquad $\\
    	// perform mutation\\
   		create offspring $y$ by flipping each bit of $x$ independently w.p. $1/n$\;
        $\qquad $\\
    	// perform selection\\
		\If{$f(y) \leq f(x)$}{
			$x \gets y$\;
		}
        $\qquad $\\
   	}
    \textbf{return} x\;
\end{algorithm}
Initially, an \emph{individual} $x$ is sampled uniformly at random (u.a.r.). An offspring $y$ is then generated, by flipping all bits of $x$ independently with probability $1/n$. The fitness values of $x$ and $y$ are then compared. If the value $f(y)$ is less than or equal to $f(x)$, then $x$ is discarded and $y$ is preserved in memory; otherwise $y$ is discarded and $x$ is preserved in memory. Note that this algorithms are defined for minimization problems. However, symmetric definitions hold for maximization problems.\\
The \gsemo is a multi-objective evolutionary algorithm (cf. Algorithm \ref{GSEMO}). 
\begin{algorithm}[t]
\caption{The \gsemo.}
\label{GSEMO}
    \textbf{input:} a fitness function $f:2^V \rightarrow \mathbb{R}_{\geq 0}$\;
    \textbf{output:} an (approximate) global minimum of the function $f$\;
    $\qquad $\\
    // sample initial solution\\
    choose  $x\in \{0,1\}^n$ uniformly at random\;
    $P \gets P \cup \{x\}$\;
    $\qquad $\\    
\While{convergence criterion not met}{
    $\qquad $\\
    // perform mutation\\
	choose $x\in P$ uniformly at random\;
	create $y$ by flipping each bit of $x$ independently w.p. $1/n$\;
	$\qquad $\\
    // perform selection\\
	\If {$f(y)$ is not dominated by $f(z)$, for all points $z \in P$}{
		$P \gets P \cup \{y\}$\;
		delete all solutions $z\in P$ s.t. $f(z)$ is dominated by $f(y)$\;
	}
}
\Return $P$\;
\end{algorithm}
As in the case of the \ea, an initial solution is chosen u.a.r. from the objective space and stored it in the Pareto front $P$.  An element $x$ is then chosen u.a.r. from $P$ and  a new solution $y$ is generated by flipping each bit of $x$ independently w.p. $1/n$. If $y$ is not strongly dominated by any other solution in $P$, then $y$ is stored in the Pareto front, and all elements which are strongly dominated by $y$ are discarded. Otherwise, the population remains unchanged.

\subsection{The Multiplicative Drift theorem.}\label{sec:multiplicative}
In the case of the \ea, for any fitness function $f: \{0, 1\}^n \rightarrow \mathbb{R}_{\geq 0}$, we describe its run time as a Markov chain $\{X_t\}_{t\geq 0}$, where $X_t$ is the $f$-value reached at time step $t$. Following this convention, we perform parts of the analyses with the \emph{Multiplicative Drift Theorem} (see Doerr et al.~\cite{DBLP:journals/algorithmica/DoerrJW12}), a strong tool to analyze the run time of evolutionary algorithms such as the \ea . Intuitively, this theorem gives an estimate for the expected value of the run time of the \ea, provided that the change of the average value of the process $\{X_t\}_{t \geq 0}$ is within a multiplicative factor of the previous solution. More formally, the following theorem holds.
\begin{theorem}[Multiplicative Drift]
\label{thm:multiplicative_drift}
Let $S\subseteq \mathbb{R}$ be a finite set of positive numbers with minimum $s_{\min}$. Let $\{X^{(t)}\}_{t\in \mathbb{N}}$ be a random variables over $S \cup \{0\}$. Let $T$ be the random variable that denotes the first point in time $t \in \mathbb{N}$ for which $X^{(t)} = 0$.

Suppose that there exists a real number $\delta > 0$ such that
\[
\mathbb{E}[X^{(t)} - X^{(t + 1)} \mid X^{(t)} = s] \geq \delta s
\]
holds for all $s \in S$ with $Pr [X^{(t)} = s] > 0$. Then for all $s_0 \in S$ with $Pr [X^{(0)} = s_0] > 0$, we have
\[
\mathbb{E}[T\mid X^{(0)}  = s_0] \leq \frac{1 + \ln (s_0/s_{\min})}{\delta}.
\]
\end{theorem}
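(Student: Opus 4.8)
The plan is to convert the multiplicative statement into an \emph{additive} drift statement via a logarithmic change of potential, and then read off the hitting-time bound. First I would assume without loss of generality that $0$ is absorbing and that $0 < \delta < 1$: replacing $\{X^{(t)}\}$ by the process stopped at time $T$ leaves the drift hypothesis intact for every $s \in S$ and changes neither $T$ nor its distribution, while guaranteeing $\{X^{(t)} > 0\} = \{T > t\}$; and if $\delta \geq 1$ the hypothesis forces $X^{(t+1)} = 0$ almost surely, so the process hits $0$ in one step and the bound is immediate.

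Next I would introduce the potential $g(x) := 1 + \ln(x/s_{\min})$ for $x \in S$ and $g(0) := 0$. Since every positive state satisfies $x \geq s_{\min}$, we have $g \geq 0$ on $S$, and $g(s_0) = 1 + \ln(s_0/s_{\min})$, so the target is precisely $g(s_0)/\delta$. The core of the argument is to show that $g$ has additive drift at least $\delta$, namely
\[
\mathbb{E}\big[g(X^{(t)}) - g(X^{(t+1)}) \mid X^{(t)} = s\big] \geq \delta \quad \text{for all } s \in S.
\]
Granting this, the Additive Drift Theorem of He and Yao (applicable since $S$ is finite and $g$ vanishes only at the target) yields $\mathbb{E}[T \mid X^{(0)} = s_0] \leq g(s_0)/\delta$, which is the claim. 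If a self-contained treatment is preferred, this final step can instead be carried out by an optional-stopping/telescoping computation on the submartingale $g(X^{(t)}) + \delta t$.

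To establish the additive drift I would first rewrite the hypothesis as the one-step contraction $\mathbb{E}[X^{(t+1)} \mid X^{(t)} = s] \leq (1-\delta)s$. Writing $q := \Pr[X^{(t+1)} > 0 \mid X^{(t)} = s]$ and using that $g$ vanishes on $\{X^{(t+1)} = 0\}$, concavity of $\ln$ (Jensen) gives
\[
\mathbb{E}\big[g(X^{(t+1)}) \mid X^{(t)} = s\big] \leq q\Big(1 + \ln \tfrac{\mathbb{E}[X^{(t+1)} \mid X^{(t+1)} > 0]}{s_{\min}}\Big) \leq q\Big(1 + \ln \tfrac{(1-\delta)s}{q\, s_{\min}}\Big),
\]
where the last step uses $q\,\mathbb{E}[X^{(t+1)} \mid X^{(t+1)} > 0] = \mathbb{E}[X^{(t+1)}] \leq (1-\delta)s$. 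Subtracting this from $g(s) = 1 + \ln(s/s_{\min})$ and discarding the nonnegative term $(1-q)\ln(s/s_{\min})$ reduces everything to the single-variable inequality $F(q) := 1 - q - q\ln(1-\delta) + q\ln q \geq \delta$ on $q \in [0,1]$. A short calculus check shows $F'(q) = \ln q - \ln(1-\delta)$, so $F$ is minimized at $q = 1-\delta$, where $F(1-\delta) = \delta$, closing the argument.

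The main obstacle is exactly this last computation. The process can jump directly into the absorbing state $0$, where the naive potential $\ln(x/s_{\min})$ diverges to $-\infty$; the additive constant $1$ in $g$ is introduced precisely to absorb that jump. The delicate point is that after applying Jensen one must \emph{optimize} over the extinction probability $q$ rather than bound it crudely: only the sharp value $q = 1-\delta$ recovers the clean constant $\delta$, and hence the exact factor $1 + \ln(s_0/s_{\min})$, whereas a direct Markov-inequality-plus-geometric-sum estimate would lose an additive $\Theta(1/\delta)$ term.
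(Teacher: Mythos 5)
Your proof is correct, but note that the paper itself contains no proof of this theorem: it only cites Doerr, Johannsen, and Winzen \cite{DBLP:journals/algorithmica/DoerrJW12}, so the relevant comparison is with their original argument. Your overall strategy is the same as theirs — introduce the potential $g(x) = 1 + \ln(x/s_{\min})$ with $g(0)=0$, show it has additive drift at least $\delta$, and finish by additive drift / telescoping — but your key step differs. Doerr et al.\ avoid both Jensen's inequality and the optimization over the absorption probability $q$: they split the one-step change over the events $\{X^{(t+1)}=0\}$ and $\{X^{(t+1)}>0\}$ and apply the pointwise inequality $\ln(s/x) \geq 1 - x/s$, which gives
\[
\mathbb{E}\bigl[g(X^{(t)}) - g(X^{(t+1)}) \mid X^{(t)}=s\bigr] \;\geq\; \Pr[X^{(t+1)}=0] + \Pr[X^{(t+1)}>0] - \tfrac{\mathbb{E}[X^{(t+1)}]}{s} \;\geq\; \delta
\]
in two lines, with no calculus. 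Your route through Jensen and the exact minimization of $F(q)$ at $q = 1-\delta$ reaches the same sharp constant and is a nice illustration of why the additive $+1$ in the potential is exactly what is needed, but it is strictly more work; the pointwise bound makes the case distinction on $q$ unnecessary. Two small points to fix: the auxiliary process $g(X^{(t\wedge T)}) + \delta(t\wedge T)$ is a \emph{supermartingale}, not a submartingale as you wrote; and your "without loss of generality stop the process at $T$" step silently uses that the transition law from a positive state is unaffected by conditioning on $\{T>t\}$, which holds for Markov processes — the original theorem of Doerr et al.\ explicitly assumes a Markov process, an assumption this paper's restatement drops, so you should state it.
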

A proof of this result is given in Doerr et al. \cite[Theorem 3]{DBLP:journals/algorithmica/DoerrJW12}.
\section{Power-Law Bounded Networks}
\label{sec:tecDef}
In many real-world networks the degree distribution approximately follows a power-law. We frame this concept with the following definitions (see Brach et al. \cite{PLBNetworks}).
\begin{definition}[\PLBU Network]
\label{def:plbu}
An undirected graph  $G=(V, E)$ is a power-law bounded network, if there exists parameters $1<\beta=\mathcal{O}(1)$ and $t\ge 0$ s.t. 
the number of nodes $v$ with $\delta (v) \in [2^{d}, 2^{d + 1})$ is at most
\[
c_1 n(t + 1)^{\beta - 1} \sum_{i=2^d}^{2^{d+1}-1} (i + t)^{-\beta},
\]
for all $d\geq 0$, and with $c_1>0$ a universal constant.
\end{definition}
Definition \ref{def:plbu} intuitively captures the general idea that vertices with bounded degree can be grouped into sets with cardinality upper-bounded by a power-law. Again, we remark that this property is \emph{deterministic}. We say a graph is a \emph{\PLBU networks} if it is power-law bounded as in Definition \ref{def:plbu}. 

We prove that \PLBU networks have the property that feasible solutions of various covering problems yield a constant-factor approximation ratio. To this end, we consider the following definition.
\begin{definition}
\label{def:dominating_set}
Consider an undirected graph $G=(V, E)$. We say that a set $D\subseteq V$ is \emph{dominating} if every node of $V\setminus D$ is adjacent to at least one node of $D$.
\end{definition}
Dominating sets play an important role in our analysis, because the sum of the degree of their nodes always yields constant-factor approximation with respect (w.r.t.) to their cardinality, on \PLBU networks with bounded parameters $\beta, t$. More formally, the following lemma holds.
\begin{lemma}
\label{lemma:potential_volume}
Let $G=(V, E)$ be a \PLBU network with parameters $2< \beta = \bigo{1}$, $t \geq 0$, and a universal constant $c_1$. Denote with $\optd$ a dominating set of $G$. Define the constants \abeq Then it holds
\[
\frac{\sum_{v \in \optd}(\delta(v) + 1)}{\absl{\optd}} \leq 2 a b + 1 = \Theta \left ( 1 \right ).
\]
\end{lemma}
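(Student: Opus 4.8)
The plan is to prove the equivalent bound $\sum_{v\in\optd}\delta(v)\le 2ab\,\absl{\optd}$, since the ``$+1$'' contributes exactly $\sum_{v\in\optd}1=\absl{\optd}$ and hence turns this into the stated inequality $\sum_{v\in\optd}(\delta(v)+1)\le(2ab+1)\absl{\optd}$. The only structural property of $\optd$ I would use is domination: writing $N[v]$ for the closed neighborhood of $v$, the sets $N[v]$ with $v\in\optd$ cover $V$, so
\[
n=\absl{V}\le\sum_{v\in\optd}\absl{N[v]}=\sum_{v\in\optd}(\delta(v)+1).
\]
This inequality is the lever that lets me trade the ambient size $n$ for the target quantity $\sum_{v\in\optd}(\delta(v)+1)$ at the end of the argument.

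Next I would extract from Definition~\ref{def:plbu} two tail estimates. Summing the per-band bound over all dyadic bands $[2^d,2^{d+1})$ lying above a degree threshold $k$, and comparing the resulting power-law sum $\sum_{i\ge k}(i+t)^{-\beta}$ with a telescoping majorant, I would bound both the number $\#\{v:\delta(v)\ge k\}$ of high-degree vertices and, crucially, their total degree $\sum_{v:\delta(v)\ge k}\delta(v)$. On each band the estimate $2^{d+1}\le 2i$ is what produces the factor $2$ in the final constant, while the telescoping comparison is what produces the correction factor $[1-((t+2)/(t+1))^{1-\beta}]^{-1}$ appearing in $a$. The threshold $b$ is defined precisely so that the universal factor $c_1(t+1)^{\beta-1}$ cancels against $b^{\beta-2}$, collapsing the degree-tail above $b$ into a clean constant multiple of $n$; this accounts for the exponent $1/(\beta-2)$ in the definition of $b$.

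Then I would split $\optd=\optd_{\mathrm{low}}\cup\optd_{\mathrm{high}}$ according to whether $\delta(v)<b$ or $\delta(v)\ge b$. The low part contributes at most $b\,\absl{\optd_{\mathrm{low}}}\le b\,\absl{\optd}$. The high part is bounded by the total degree of \emph{all} vertices of $G$ of degree at least $b$, which the tail estimate of the previous step controls by a constant (governed by $a$) times $n$. Feeding in the domination inequality $n\le\sum_{v\in\optd}(\delta(v)+1)$ converts this into the same constant times $\sum_{v\in\optd}(\delta(v)+1)$; isolating this sum and simplifying with the definitions of $a$ and $b$ then yields $\sum_{v\in\optd}(\delta(v)+1)\le(2ab+1)\absl{\optd}$.

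The main obstacle is exactly this high-degree contribution: a handful of vertices of very large degree could a priori dominate the sum, so the whole argument rests on using the \PLBU property to prove that the \emph{total} degree carried by vertices of degree at least $b$ is only a controlled fraction of $n$, and hence (via domination) of $\sum_{v\in\optd}(\delta(v)+1)$ itself. The delicate point is to carry out the power-law tail summation tightly enough that, after the self-referential substitution $n\le\sum_{v\in\optd}(\delta(v)+1)$, the coefficient of $\sum_{v\in\optd}(\delta(v)+1)$ remains strictly below $1$ and the emerging constant is exactly $2ab+1=\Theta(1)$ uniformly for every $\beta>2$; this is precisely what the carefully chosen constants $a$ and $b$ are engineered to guarantee, and it is where I expect the calculation to be most fragile as $\beta\to 2^{+}$.
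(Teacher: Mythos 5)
Your assembly has a genuine gap at exactly the point you flag as ``fragile,'' and the constants $a,b$ are not engineered to repair it. Work out your tail estimate with the threshold $b$: the total degree carried by vertices of degree at least $b$ is at most $2c_1 n(t+1)^{\beta-1}(b+t)^{2-\beta}/(\beta-2)$ (in fact slightly worse, since the dyadic band containing $b$ starts near $b/2$), and substituting $b^{\beta-2}=4c_1(t+1)^{\beta-1}/(\beta-1)$ collapses this to at most $\tfrac{\beta-1}{2(\beta-2)}\,n$. Your bootstrap inequality therefore reads $\sum_{v\in\optd}(\delta(v)+1)\le (b+1)\absl{\optd}+\tfrac{\beta-1}{2(\beta-2)}\sum_{v\in\optd}(\delta(v)+1)$, which can only be rearranged when $\tfrac{\beta-1}{2(\beta-2)}<1$, i.e.\ when $\beta>3$. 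For every $2<\beta\le 3$ the coefficient is at least $1$ (and diverges as $\beta\to 2^{+}$), so the self-referential substitution yields nothing. Even for $\beta>3$, isolating the sum gives the constant $2(\beta-2)(b+1)/(\beta-3)$, which blows up as $\beta\to 3^{+}$ and is not the claimed $2ab+1$. One could salvage a $\Theta(1)$ bound for all $\beta>2$ by raising the threshold to roughly $\left[Cc_1(t+1)^{\beta-1}/(\beta-2)\right]^{1/(\beta-2)}$ so that the tail coefficient becomes, say, $1/2$, but that proves a different lemma with a different constant, not the stated one.

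The paper avoids bootstrapping altogether, which is the structural idea your proposal is missing. It defines $\gamma$ as the smallest index at which the number of vertices of degree at least $2^\gamma$ drops to at most $\absl{\optd}$. Then (i) $\sum_{v\in\optd}\delta(v)$ is bounded above by the total degree of the vertices of degree at least $2^\gamma$, which by the \PLBU property and an integral comparison is at most $2c_1n(t+1)^{\beta-1}(2^\gamma+t)^{2-\beta}/(\beta-2)$, and (ii) $\absl{\optd}$ is bounded below by the number of such vertices, at least $c_1n(t+1)^{\beta-1}\tfrac{1-((t+2)/(t+1))^{1-\beta}}{\beta-1}(2^\gamma+t)^{1-\beta}$. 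Dividing, the factor $c_1n(t+1)^{\beta-1}$ cancels and the ratio is at most $2a(2^\gamma+t)+1$; no coefficient ever needs to be below $1$. Domination is then used only once, to cap the threshold itself: since $\sum_{v\in\optd}\delta(v)\ge n/2$, inequality (i) forces $2^\gamma+t\le b$. So in the paper $b$ is not a splitting threshold but an upper bound on the cardinality-matched threshold $\gamma$. Your individual ingredients (the factor $2$ from $2^{d+1}\le 2i$, the correction factor in $a$ from the tail comparison, the domination inequality) are the right ones, but they must be assembled around this $\gamma$ rather than around a fixed cut at $b$ plus self-reference.
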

\begin{proof}
Define the following constants
\begin{equation*}
d_{\geq k} := \absl{\{v \in V \colon \delta(v) \geq k\}} \quad \mbox{and}\quad\gamma := \min \{k \in \mathbb{N}\colon d_{\geq k} \leq \absl{\optd}\}.
\end{equation*}
Then $d_{\geq k}$ is the number of nodes of $G$ with degree at least $k$, and $\gamma$ is the smallest index $k$ by which the nodes of $G$ with degree at least $2^k$ is at most $\absl{\optd}$. We first prove an upper-bound on the numerator, in terms of $\gamma$. From Definition \ref{def:dominating_set} it holds
\begin{align}
\sum_{v \in \optd}\delta(v) & \leq c_1 n(t + 1)^{\beta - 1} \sum_{j = \gamma}^{\lceil \log (n - 1) \rceil} 2^{j + 1} \sum_{i = 2^j}^{2^{j + 1} - 1}(i + t)^{-\beta}\nonumber \\
& \leq 2 c_1 n(t + 1)^{\beta - 1} \sum_{i = 2^{\gamma}}^{2^{\lceil \log (n - 1) \rceil + 1}- 1} (i + t)^{1-\beta}\nonumber \\
& \leq 2 c_1 n(t + 1)^{\beta - 1} \int_{2^{\gamma}}^{2^{\lceil \log (n - 1) \rceil + 1}- 1} (x + t)^{1-\beta}dx\nonumber \\
& \leq 2 c_1 n(t + 1)^{\beta - 1} \frac{(2^{\gamma} + t)^{2 - \beta}}{\beta -2}.\label{eq:4potential_vol}
\end{align}
We continue by computing a lower-bound on $\absl{\optd}$ in terms of $\gamma$. Again, from Definition \ref{def:dominating_set} we have that it holds
\begin{align}
\absl{\optd} \geq n_{\gamma} & = c_1 n (t +1)^{\beta - 1} \sum_{i = 2^{\gamma}}^{2^{\lceil \log (n - 1) \rceil + 1}- 1} (i + t)^{-\beta}\nonumber \\
& \geq c_1 n (t +1)^{\beta - 1} \int_{2^{\gamma}}^{2^{\gamma + 1}} (x + t)^{-\beta}dx\nonumber \\
& \geq c_1 n (t +1)^{\beta - 1} \frac{1 - \left (\frac{t + 2}{t + 1} \right )^{1 - \beta}}{\beta - 1} (2^\gamma + t)^{1-\beta}.\label{eq:5potential_vol}
\end{align}
We combine the upper-bound on the numerator $\sum_{v \in \optd}\delta(v) + 1$ given in \eqref{eq:4potential_vol}, together with the lower bound on $\absl{\optd}$ as in \eqref{eq:5potential_vol}, to obtain that it holds
\begin{equation}
\frac{\sum_{v \in \optd} (\delta(v) + 1)}{\absl{\optd}} \leq 2 \frac{\beta - 1}{\beta - 2} \left [1 - \left (\frac{t + 2}{t + 1}\right )^{1 - \beta} \right ]^{-1}(2^\gamma + t) + 1.\label{eq:first_estimate_ds}
\end{equation}
We conclude by giving an upper-bound for $2^\gamma + t$. From the definition of dominating set (see Definition \ref{def:dominating_set}) it holds $\sum_{v \in \optd}\delta(v) \geq n/2$. Combining this observation with \eqref{eq:4potential_vol}, it holds
\[
\frac{n}{2} \leq 2 c_1 n(t + 1)^{\beta - 1} \frac{(2^{\gamma} + t)^{2 - \beta}}{\beta -2},
\]
from which it follows that
\begin{equation}
2^{\gamma} + t \leq \left [ 4 c_1 \frac{(t + 1)^{\beta - 1}}{\beta - 2} \right ]^{\frac{1}{\beta - 2}}.
\label{eq:second_estimate_ds}
\end{equation}
The claim follows by substituting \eqref{eq:second_estimate_ds} in \eqref{eq:first_estimate_ds}.
\end{proof}
Note that Lemma \ref{lemma:potential_volume}  uses a parameter $\beta > 2$. From a practical point of view this assumption is not restrictive, since most real-world networks fulfill this requirement. However, for $\beta < 2$ one may define a degenerate \PLBU network for which our analysis fails.

Lemma \ref{lemma:potential_volume} can be readily used used to prove approximation guarantees for the Minimum Dominating Set problem - i.e. the problem of searching for a minimum dominating set. In the following, however, we show that this lemma can be used to derive approximation guarantees for other common covering problems.

Given a graph $G=(V, E)$, a \emph{connected dominating set} is any dominating set $C \subseteq V$ s.t. the nodes of $C$ induce a connected sub-graph of $G$. From Lemma \ref{lemma:potential_volume} the following result readily follows.
\begin{corollary}
\label{cor:cds}
Let $G=(V, E)$ be a \PLBU network with parameters $2< \beta = \bigo{1}$, $t \geq 0$, and a universal constant $c_1$. Denote with $\optc$ a connected dominating set of $G$. Define the constants \abeq Then it holds
\[
\frac{\sum_{v \in \optc}(\delta(v) + 1)}{\absl{\optc}} \leq 2 a b + 1 = \Theta \left ( 1 \right ).
\]
\end{corollary}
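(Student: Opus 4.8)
The plan is to reduce the statement directly to Lemma~\ref{lemma:potential_volume}, with essentially no extra work. The crucial observation is that, by definition, a connected dominating set is in particular a dominating set: the set $\optc$ satisfies Definition~\ref{def:dominating_set}, and the additional requirement that the subgraph induced by $\optc$ be connected only shrinks the family of admissible sets. Hence I would simply invoke Lemma~\ref{lemma:potential_volume} with $\optd := \optc$.

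Concretely, the constants $a$ and $b$ appearing in the corollary are defined exactly as in the lemma, and the quantity to be bounded, $\sum_{v \in \optc}(\delta(v)+1)/\absl{\optc}$, is the same functional of $\optc$ that the lemma controls for an arbitrary dominating set. Since $\optc$ is a dominating set, the inequality $\sum_{v \in \optc}(\delta(v)+1)/\absl{\optc} \leq 2ab + 1$ is an immediate instance of the lemma, and $2ab+1 = \Theta(1)$ because $\beta = \bigo{1}$ and $t$ are constants.

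The only point genuinely worth checking — and the reason the reduction is sound rather than merely plausible — is that the proof of Lemma~\ref{lemma:potential_volume} never appeals to connectivity of the dominating set. Tracing that argument, the sole structural input is the domination property itself, which enters through the volume bound $\sum_{v \in \optd}\delta(v) \geq n/2$ and through the \PLBU counting of nodes by degree class; none of the displayed inequalities \eqref{eq:4potential_vol}--\eqref{eq:second_estimate_ds} uses that the subgraph induced by $\optd$ is connected. Consequently the entire chain of estimates applies verbatim to $\optc$, so there is no real obstacle: the corollary is exactly the specialization of the lemma to the subclass of connected dominating sets, and the bound is inherited unchanged.
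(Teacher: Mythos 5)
Your proposal is correct and coincides with the paper's own treatment: the paper gives no separate proof of Corollary~\ref{cor:cds}, stating only that it ``readily follows'' from Lemma~\ref{lemma:potential_volume}, precisely because a connected dominating set is a dominating set. Your additional check that the lemma's proof nowhere uses connectivity makes the reduction fully explicit, but it is the same argument.
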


Finally, we apply Lemma \ref{lemma:potential_volume} to independent sets. Given a graph $G=(V, E)$, an \emph{independent set} $M\subseteq V$ consists of vertices of $G$, no two of which are adjacent. A \emph{maximum independent set} is any independent set of maximum size. It is not true, in general, that an independent set is a dominating set. However, a maximum independent set is always a dominating set. Hence, the following corollary holds.
\begin{corollary}
\label{cor:mis}
Let $G=(V, E)$ be a \PLBU network with parameters $2< \beta = \bigo{1}$, $t \geq 0$, and a universal constant $c_1$. Denote with $\optm$ a maximum independent set of $G$. Define the constants \abeq Then it holds
\[
\frac{\sum_{v \in \optm}(\delta(v) + 1)}{\absl{\optm}} \leq 2 a b + 1 = \Theta \left ( 1 \right ).
\]
\end{corollary}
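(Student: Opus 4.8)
The plan is to reduce this corollary directly to Lemma~\ref{lemma:potential_volume} by observing that a maximum independent set is necessarily a dominating set. Once that structural fact is in hand, the bound on $\sum_{v \in \optm}(\degr(v)+1)/\absl{\optm}$ follows by applying the lemma with $\optd := \optm$, since the lemma holds for \emph{any} dominating set of a \PLBU network with the stated parameters $2 < \beta = \bigo{1}$, $t \geq 0$.

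First I would establish that $\optm$ is a dominating set, via a standard maximality contradiction. Suppose $\optm$ were not dominating. Then there exists a node $v \in V \setminus \optm$ with no neighbor in $\optm$. In that case $\optm \cup \{v\}$ is still an independent set, because $v$ shares no edge with $\optm$ and $\optm$ is independent by assumption, yet it has strictly larger cardinality than $\optm$, contradicting the maximality of $\optm$. Hence every node of $V \setminus \optm$ has a neighbor in $\optm$, which is exactly the condition of Definition~\ref{def:dominating_set}.

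With $\optm$ identified as a dominating set, I would invoke Lemma~\ref{lemma:potential_volume} taking $\optd = \optm$ and using the same constants $a$ and $b$ fixed in the statement. The conclusion $\sum_{v \in \optm}(\degr(v)+1)/\absl{\optm} \leq 2ab + 1 = \Theta(1)$ is then immediate.

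There is essentially no computational obstacle here: all the analytic work in the proof of Lemma~\ref{lemma:potential_volume} — the bounds \eqref{eq:4potential_vol} and \eqref{eq:5potential_vol}, and the estimate on $2^\gamma + t$ — relied only on the set in question dominating $G$, not on any independence property. The single point requiring care is the maximality argument itself: it is crucial that $\optm$ be a \emph{maximum} (or at least inclusion-maximal) independent set, since, as noted preceding the statement, a general independent set need not be dominating and the corollary would fail for such a set.
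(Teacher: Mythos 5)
Your proposal is correct and follows exactly the paper's route: the paper justifies the corollary by noting that a maximum independent set is always a dominating set and then invoking Lemma~\ref{lemma:potential_volume}, and your maximality-contradiction argument simply fills in that standard observation explicitly.
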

%
%
%
%
\section{The Minimum Dominating Set Problem}\label{sec:ds}
We study the minimum dominating set problem (\ds): For a given graph $G=(V,E)$, find a dominating set of $V$ (see Definition \ref{def:dominating_set}) of minimum cardinality. In this section, we denote with $\opt$ any optimal solution to the \ds. Given a graph $G=(V, E)$ with $|V|=n$, we consider an indexing $i:V\rightarrow \{1, \dots, n \}$ of the vertices. We represent any set of nodes $S \subseteq V$ with a pseudo-Boolean array $(x_1, \dots, x_n)$ of length $n$ and s.t. $x_i = 1 $ if and only if the $i$-th node of $G$ is in $S$.
%
%
\subsection{Single-objective optimization.}
\label{subsec:single_obj_mds}
We search for a minimum dominating set by minimizing the following single-objective function 
\begin{equation}
\label{eq:fitness_ds}
F(x)= n u(x)+\lvert x\rvert _1,
\end{equation}
where $u(x)$ is the number of non-dominated nodes, and $\lvert x\rvert _1$ is the number of $1$s in the input string. Note that $u(x) = 0$ if and only if the solution $x$ is a dominating set. We prove the main result of this section, by giving an upper-bound on the run time of the \ea until a dominating set is found. We then use Lemma \ref{lemma:potential_volume} to give an upper-bound on the approximation ratio achieved by this solution.    

\begin{theorem}
\label{Thm1}
Let $G=(V, E)$ be a \PLBU network with parameters $\beta>2$, $t\geq 0$ and with a universal constant $c_1$.  Define the constants \abeq  Then the \ea finds a $(2ab + 1)$-approximation of the \ds after expected $\bigo{n \log n}$  fitness evaluations. In particular, the \ea is a $\Theta(1)$-approximation algorithm for the \ds on \PLBU networks.
 \end{theorem}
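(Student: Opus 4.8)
The plan is to split the run into two phases. In the first phase I bound the expected number of fitness evaluations until the \ea reaches \emph{some} dominating set, i.e.\ a search point $x$ with $u(x)=0$. In the second phase I argue that on a \PLBU network \emph{any} dominating set is automatically a $(2ab+1)$-approximation, so that no further optimization of the cardinality is required for the claimed ratio. The work is therefore entirely in the feasibility phase.

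For the first phase I would track the potential $X_t := u(x_t)$, the number of non-dominated nodes of the current individual, and apply the Multiplicative Drift Theorem (Theorem~\ref{thm:multiplicative_drift}) with $S=\{1,\dots,n\}$ and $s_{\min}=1$. Two observations drive the bound. First, the term $n\,u(x)$ dominates $F$, so a mutation that strictly increases the number of non-dominated nodes raises $F$ and is rejected (the only way to decrease $\absl{x}_1$ by $n$ forces the empty set, which increases $u$ by $n$, not by a bounded amount); hence along accepted steps $X_t$ never increases. Second, for each currently non-dominated node $v$, the mutation flipping exactly the bit of $v$ from $0$ to $1$ adds $v$ to the set, makes $v$ dominated, and strictly decreases $F$, so it is accepted and lowers $X_t$ by at least one. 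Such a single-bit flip has probability $\tfrac1n\bigl(1-\tfrac1n\bigr)^{n-1}\ge \tfrac{1}{en}$, and the events for distinct $v$ are disjoint, so
\[
\mathbb{E}\!\left[X_t - X_{t+1}\mid X_t = s\right]\;\ge\;\frac{s}{en}.
\]
This yields drift parameter $\delta=\tfrac{1}{en}$, and since $X_0\le n$, Theorem~\ref{thm:multiplicative_drift} gives expected $en(1+\ln n)=\bigo{n\log n}$ evaluations until $X_t=0$, i.e.\ until a dominating set is reached.

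For the second phase the conceptual key is that on a \PLBU network the optimum cannot be small. Since the closed neighborhoods $\{v\}\cup N(v)$ of the vertices of $\opt$ cover $V$, we have $\sum_{v\in\opt}(\delta(v)+1)\ge n$. Applying Lemma~\ref{lemma:potential_volume} to the dominating set $\opt$ gives $\sum_{v\in\opt}(\delta(v)+1)\le (2ab+1)\absl{\opt}$, and combining the two estimates yields $\absl{\opt}\ge n/(2ab+1)$. Any dominating set $D$ produced by the \ea satisfies the trivial bound $\absl{D}\le n$, whence $\absl{D}/\absl{\opt}\le 2ab+1$. Thus the search point reached at the end of the first phase is already a $(2ab+1)$-approximation, and because accepted steps never turn a dominating set into a non-dominating one, this guarantee is preserved afterwards.

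The main obstacle is the clean coupling between the fitness function and the potential in the first phase: one must verify carefully that weighting by $n$ both forbids accepted moves from ever increasing $u(x)$ and guarantees that every single-vertex-addition move on a non-dominated vertex is accepted, since it is precisely this that makes the drift inequality hold for \emph{all} $s\in S$ rather than only in expectation over a favorable subset of states. Once this is in place, the second phase is a short deterministic argument that exploits Lemma~\ref{lemma:potential_volume} to establish the surprisingly strong fact that on \PLBU networks every feasible solution is a constant-factor approximation, so the algorithm need only achieve feasibility.
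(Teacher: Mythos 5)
Your proposal is correct and takes essentially the same route as the paper: multiplicative drift on the potential $u(x)$ gives feasibility within expected $\bigo{n \log n}$ fitness evaluations, and the approximation guarantee follows from the chain $\absl{x^*} \le n \le \sum_{v \in \opt}(\delta(v)+1) \le (2ab+1)\absl{\opt}$, the last inequality being Lemma~\ref{lemma:potential_volume} applied to $\opt$. Your extra verifications (that accepted steps can never increase $u(x)$, and that the disjoint single-bit-flip events yield drift at least $s/(en)$ for every state $s$) are a more careful write-up of facts the paper states only tersely, not a different argument.
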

\begin{proof}
We first observe that because of the weights on the fitness, the algorithm initially searches for a dominating set and then tries to minimize it. Furthermore, no step that reduces the number of dominated nodes is accepted. We estimate the expected run time until a feasible solution is found, and conclude by proving that any feasible solution reaches the desired approximation guarantee.

To estimate the run time we use the Multiplicative Drift theorem (see Theorem \ref{thm:multiplicative_drift}),  by defining $\{X^{(t)} \}_{t \geq 0}$ as the process $X^{(t)} = u(x_t)$ for all $t \geq 0$. Suppose that the current solution $x_t$ is not a dominating set - i.e. $u(x_t) > 0$. Then there exists a node $v \in V$ s.t. by adding it to the current solution, the fitness $u(x_t)$ decreases by $1$. Since the probability of performing a single chosen bit-flip is at least $1/en$, then it holds
\[
\mathbb{E}[X_t - X_{t - 1} \mid X_t] \geq \frac{X_t}{e n}.
\]
Following the notation of Theorem~\ref{thm:multiplicative_drift}, we can trivially set $S = \{1, \dots, n\}$ and obtain that the run time until a feasible solution is found is upper-bounded as $\bigo{n \log n}$. We conclude by proving that the \ea reaches the desired approximation guarantee. Denote with $x^*$ the first dominating set reached by the \ea. We observe that any optimal solution \opt dominates all nodes of $G$. It follows that it holds
\begin{equation}
\label{eq:last_expected_decrease}
\frac{\absl{x^*}}{\absl{\opt}} \leq \frac{\sum_{v \in \opt} (\delta(v) + 1)}{\absl{\opt}} \leq 2ab + 1,
\end{equation}
where the last inequality in \ref{eq:last_expected_decrease} follows from Lemma \ref{lemma:potential_volume}. The claim follows.
\end{proof}
%
%
\subsection{Multi-objective optimization.}
We approach \ds with the multi-objective $\gsemo$ (cf. Algorithm \ref{GSEMO}). In this case, we use the following bi-objective fitness function
\begin{equation}
f=(u(x),\lvert x\rvert _1),
\label{eq:fintess_ds_gsemo}
\end{equation}
where $u(x)$ is the number of non-dominated nodes, and $\lvert x\rvert _1$ the number of $1$s in the input string. We prove that the \gsemo fins a good approximation of the \ds in polynomial time, when optimizing a fitness as in \eqref{eq:fintess_ds_gsemo}. Useful in the analysis is the following lemma.
\begin{lemma}
\label{lemDS1}
Let $G=(V, E)$ be a graph, and let $S_k := \{v_1, \dots, v_k\}$ be a sequence of nodes s.t. $v_j$ has the maximum degree in the complete sub-graph induced by the nodes of $G$ that are not dominated by $\{v_1, \dots, v_{j - 1}\}$. Let $n_k$ be the number of nodes in $V$ that are not dominated by $S_k$. Then it holds
\[
n_{k}\leq n\left(1-\frac{1}{\lvert \opt \rvert }\right)^{k},
\]
for all $0 < k \leq n$.
\end{lemma}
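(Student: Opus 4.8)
The statement is the classical greedy/set-cover bound for domination, and I would prove it by induction on $k$ driven by a one-step multiplicative decrease of the number of undominated vertices. Write $U_{j-1}$ for the set of vertices of $G$ not dominated by $\{v_1,\dots,v_{j-1}\}$, so that $n_{j-1} = \absl{U_{j-1}}$ and $n_0 = n$. The entire lemma then reduces to the single-step inequality
\[
n_j \le n_{j-1}\left(1 - \frac{1}{\absl{\opt}}\right),
\]
since iterating this from $n_0 = n$ gives $n_k \le n\,(1-1/\absl{\opt})^k$ directly, with the base case $n_0 = n$ being trivial.

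To establish the one-step inequality I would run an averaging argument against an optimal dominating set. Because \opt dominates all of $G$, the closed neighborhoods $\{N[o]\}_{o\in\opt}$ cover $V$ and in particular cover $U_{j-1}$; hence $\sum_{o\in\opt}\absl{N[o]\cap U_{j-1}} \ge \absl{U_{j-1}} = n_{j-1}$, and by pigeonhole there is an $o^{*}\in\opt$ with $\absl{N[o^{*}]\cap U_{j-1}} \ge n_{j-1}/\absl{\opt}$. The greedy step adds the vertex $v_j$ of maximum degree in the subgraph induced by $U_{j-1}$, and adding $v_j$ newly dominates $v_j$ together with its still-undominated neighbors, i.e.\ exactly $\absl{N[v_j]\cap U_{j-1}} = 1 + \deg_{G[U_{j-1}]}(v_j)$ vertices. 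Showing this count is at least $n_{j-1}/\absl{\opt}$ yields $n_j = n_{j-1} - \absl{N[v_j]\cap U_{j-1}} \le n_{j-1}(1-1/\absl{\opt})$, which is the claim.

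The main obstacle is aligning the greedy's objective — maximizing degree \emph{within the induced subgraph on the undominated vertices} — with the averaged optimal vertex $o^{*}$. When $o^{*}$ is itself undominated this is immediate, since $\absl{N[o^{*}]\cap U_{j-1}} = 1+\deg_{G[U_{j-1}]}(o^{*})$ is one of the candidate values the greedy maximizes over, so $1+\deg_{G[U_{j-1}]}(v_j) \ge 1+\deg_{G[U_{j-1}]}(o^{*}) \ge n_{j-1}/\absl{\opt}$. The care is in accounting correctly for the $+1$ contributed by the vertex itself, so that ``degree $d$ in $G[U_{j-1}]$'' translates to ``$d+1$ freshly dominated vertices,'' and in handling the case where the best optimal vertex is already dominated; I expect this bookkeeping, rather than the induction, to be where the argument must be made precise. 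Assembling the single-step bound with the induction above then gives $n_k \le n\,(1-1/\absl{\opt})^k$ for all $0 < k \le n$.
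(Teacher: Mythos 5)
Your overall strategy --- induction on $k$ driven by a one-step averaging argument against the closed neighborhoods of \opt --- is exactly the strategy of the paper's own proof. However, the case you defer at the end (the pigeonhole vertex $o^{*}\in\opt$ being \emph{already dominated}) is not bookkeeping: it is a genuine gap, and it cannot be closed, because the one-step inequality, and the lemma as literally stated, are false for the stated greedy rule. The rule selects $v_j$ among the \emph{undominated} vertices by degree inside $G[U_{j-1}]$, so the edges from a dominated $o^{*}$ into $U_{j-1}$ are invisible to the quantity the greedy maximizes, and no comparison between $v_j$ and $o^{*}$ is available. Concretely: take a star with center $w$ and leaves $u_1,\dots,u_m$, and attach to $w$ one further vertex $v$ that also has $M>m$ pendant neighbours $x_1,\dots,x_M$. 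Then $\opt=\{v,w\}$, so $\absl{\opt}=2$, and the greedy first picks $v$ (maximum degree $M+1$). Afterwards the undominated set is $\{u_1,\dots,u_m\}$, which is an independent set whose only $G$-neighbour $w$ is already dominated; hence every later greedy step newly dominates exactly one vertex, giving $n_k=m-k+1$ for $k\geq 1$. With $M=m+1$ we have $n=2m+3$, and already at $k=2$ the claimed bound $n(1-1/\absl{\opt})^{2}=(2m+3)/4$ is violated by $n_2=m-1$ for every $m\geq 4$.

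What has to change is the greedy rule, not the bookkeeping. The lemma is true --- and your one-step argument plus the induction go through verbatim --- if $v_j$ is chosen to maximize the number of newly dominated vertices $\absl{N[v]\cap U_{j-1}}$ over \emph{all} $v\in V$, dominated candidates included: then $o^{*}$ is always an eligible candidate, so $\absl{N[v_j]\cap U_{j-1}}\geq\absl{N[o^{*}]\cap U_{j-1}}\geq n_{j-1}/\absl{\opt}$, and $n_j\leq n_{j-1}(1-1/\absl{\opt})$ follows. You should be aware that the paper's own proof has exactly the same gap and papers over it: its inductive step simply asserts that there exists a node in $V\setminus S_i$ whose addition dominates at least $n_i/\absl{\opt}$ new nodes, which proves the coverage-greedy version, not the stated one. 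That existence statement is also all that the downstream application (Theorem \ref{GSEMO_DS}) uses, since the \gsemo only needs some one-bit flip to reduce the number of undominated nodes by the factor $(1-1/\absl{\opt})$. So, read against the corrected rule, your proposal is complete and essentially identical to the paper's argument; read against the literal statement, it stalls at precisely the point where the statement itself fails.
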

\begin{proof}
We prove this by induction on $k$. For the base case, we set $k=1$. Since the number of non-dominated nodes is $n$, there exists a node $v\in V$ s.t. $v$ dominates at least $n/\lvert \opt \rvert$ nodes in the graph. Otherwise,  $\opt $ is not an optimal solution. Thus, the number of non-dominated nodes by $S_1$ is at most
\[
n_1\leq n-\frac{n}{\lvert \opt \rvert } = n \left(1-\frac{1}{\lvert \opt \rvert }\right).
\]
For the inductive step, suppose that the statement holds for $k=i$. Again, since $|\opt|$ is the size of the minimum dominating set, by the pigeon-hole principle there exists a node $ V\setminus S_i$ s.t. $S_i \cup \{v\}$ dominates at least $n_i/\absl{\opt}$ many nodes. Therefore, it holds that 
\[
n_{i+1} \leq n_i - \frac{n_i}{\absl{\opt }} = n_i \left (1 - \frac{1}{\absl{\opt}} \right ) = n  \left (1 - \frac{1}{\absl{\opt}} \right ) ^{i+1}.
\]
The claim follows.
\end{proof}
We use the lemma above to analyze the run time and approximation guarantee of  the \gsemo on \ds. The following theorem holds.
\begin{theorem}
\label{GSEMO_DS}
Let $G=(V, E)$ be a \PLBU network with parameters $\beta>2$, $t\geq 0$ and with a universal constant $c_1$.  Define the constants \abeq  Then  the \gsemo finds an $\ln(2ab+1)$-approximation for the \ds after expected $\bigo{n^3}$ fitness evaluations. In particular, \gsemo is a $\Theta(1)$-approximation algorithm for the \ds on \PLBU networks.
\end{theorem}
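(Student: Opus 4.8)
The plan is to use the bi-objective model $f=(u(x),\absl{x}_1)$ so that the \gsemo mimics the classical greedy algorithm for dominating sets along its Pareto front, and then to convert the resulting logarithmic greedy guarantee into a constant using the structural bound of Lemma \ref{lemma:potential_volume}. Two preliminary facts drive everything. First, for each of the $n+1$ possible values of $\absl{x}_1$ the population $P$ retains at most one individual (two solutions with the same number of ones are comparable in $u$, and the worse one is removed), so $\absl{P}\le n+1$ and any fixed individual is selected with probability at least $1/(n+1)$. Second, the empty solution $\mathbf{0}$, once present, is the unique individual with $\absl{x}_1=0$ and hence can never be removed from $P$.

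First I would bound the time to insert $\mathbf{0}$ into $P$. The individual of currently minimal $\absl{x}_1=j>0$ can always have one of its $j$ ones flipped to zero, producing a new cardinality-minimizer that is accepted; such a step occurs with probability at least $\tfrac{1}{n+1}\cdot\tfrac{j}{en}$, and summing the waiting times over $j=1,\dots,n$ shows $\mathbf{0}$ enters $P$ in expected $\bigo{n^2\log n}$ steps. Next I would run the greedy simulation, tracking the progress measure $P(t)=\max\{\, i : P \text{ contains some } x \text{ with } \absl{x}_1\le i \text{ and } u(x)\le n(1-1/\absl{\opt})^i \,\}$. This index is well defined by the argument of Lemma \ref{lemDS1}: from any non-dominating $x$, the pigeon-hole principle applied to the closed neighborhoods of \opt yields a single vertex whose addition dominates at least a $1/\absl{\opt}$ fraction of the still-undominated nodes, so a solution $x'$ with $\absl{x'}_1\le i+1$ and $u(x')\le n(1-1/\absl{\opt})^{i+1}$ exists. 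Selecting the current witness (probability $\ge 1/(n+1)$) and flipping exactly that one bit (probability $\ge 1/(en)$) raises the progress measure, and the new individual is accepted as a fresh non-dominated point at its cardinality. Hence each unit of progress costs expected $\bigo{n^2}$ steps, and since the index never decreases, reaching index $k^\ast$ costs $\bigo{k^\ast n^2}$.

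For the approximation ratio I would deliberately stop greedy early rather than run it to completion (running to completion only yields the ordinary $\bigo{\log n}$ greedy bound). I set $k^\ast:=\lceil \absl{\opt}\ln(2ab+1)\rceil$. Lemma \ref{lemma:potential_volume} applied with $\optd=\opt$ gives $\sum_{v\in\opt}(\delta(v)+1)\le (2ab+1)\absl{\opt}$, while the closed neighborhoods of a dominating set cover $V$, so this sum is also at least $n$; thus $n\le (2ab+1)\absl{\opt}$. The individual reached at index $k^\ast$ therefore satisfies $u(x)\le n(1-1/\absl{\opt})^{k^\ast}\le n\,e^{-k^\ast/\absl{\opt}}\le (2ab+1)\absl{\opt}\,e^{-\ln(2ab+1)}=\absl{\opt}$, i.e. at most $\absl{\opt}$ nodes remain undominated. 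Continuing with single-bit additions, each successful step dominates at least one more node and is accepted, so after at most $\absl{\opt}$ further additions $P$ contains a dominating set ($u=0$) of cardinality at most $k^\ast+\absl{\opt}\le (1+\ln(2ab+1))\absl{\opt}$, which is the claimed $\Theta(1)=\ln(2ab+1)$-type ratio. Since $k^\ast+\absl{\opt}\le n$, all phases together run in expected $\bigo{n^3}$ fitness evaluations.

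The main obstacle is exactly this last combination: the greedy simulation on its own only certifies a $\bigo{\log n}$ guarantee, and the constant factor materializes only after truncating greedy at the \PLBU-calibrated index $k^\ast$ (using $n\le (2ab+1)\absl{\opt}$ from Lemma \ref{lemma:potential_volume}) and paying off the small remainder node-by-node. Balancing the two budgets — the greedy phase and the clean-up phase — so that the total stays $\bigo{n^3}$ while the final cardinality stays within a constant multiple of $\absl{\opt}$ is the delicate part; the bookkeeping for the insertion of $\mathbf{0}$ and the monotonicity of the progress measure are comparatively routine.
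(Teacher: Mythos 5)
Your proposal is correct and follows the same skeleton as the paper's proof: an $\bigo{n^2\log n}$ phase to insert $0^n$ into the population, a greedy-simulation phase driven by the pigeon-hole step of Lemma~\ref{lemDS1} at expected cost $\bigo{n^2}$ per added vertex, and the \PLBU bound $n\le(2ab+1)\absl{\opt}$ from Lemma~\ref{lemma:potential_volume} to turn the logarithmic greedy guarantee into a constant. The one genuine divergence is the endgame. The paper runs the greedy chain until the first dominating set $S_k$ and then argues via \eqref{eq:MDS}, whose first inequality $\absl{S_k}/\absl{\opt}\le (n/\absl{\opt})\left(1-1/\absl{\opt}\right)^k$ compares the cardinality of $S_k$ against the bound on the number of still-undominated nodes; at the moment of domination that right-hand side can drop below $1$ while the left-hand side is at least $1$, so this step is not justified as written. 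You instead truncate the greedy phase at $k^*=\lceil\absl{\opt}\ln(2ab+1)\rceil$, observe that at most $\absl{\opt}$ nodes then remain undominated, and pay for them one by one in a clean-up phase; this is the standard, rigorous accounting, and your progress-measure formulation additionally handles the possibility (glossed over in the paper) that a greedy witness is displaced from the population by an incomparable or better individual. The price is a slightly weaker constant, $1+\ln(2ab+1)$ rather than the stated $\ln(2ab+1)$ --- still $\Theta(1)$, so the theorem's essential claim is intact, and in fact your constant matches the form the paper itself uses in Theorem~\ref{GSEMO_VC}, which carries the additive $+1$.
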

\begin{proof}
The proof consists of two parts. We first estimate the expected run time until the solution $0^n$ - i.e. the empty set - is added to the Pareto front (Phase 1), and then we estimate the time until the desired approximation is reached from that solution (Phase 2).

(Phase 1) To analyze the first part of the process, we denote with $x^*$ the individual in the Pareto front with the minimum number of $1$s, at a given time step. Since the total number of elements in the Pareto front is upper-bounded by $n+1$, the probability of choosing $x^*$ for mutation is at least $1/(n+1)$ at each iteration. Once $x^*$ is selected for mutation, then the probability of performing a single bit-flip and remove one element to $x^*$ is at least $\absl{x^*}_1/(en)$. Therefore, the expected waiting time to reach the solution $0^n$ is upper-bounded as
\[
(n+1) \sum_{j=1}^{n} \frac{e n}{j} = \bigo{n^2 \log n}.
\]

(Phase 2) We now give an upper bound on the run time until the desired approximation is reached, assuming that the solution $0^n$ has been found. From Lemma \ref{lemDS1}, it follows that there exists a node $v_1\in V$ s.t. after adding this node to $0^n$ the number of non-dominated nodes is at most
\[
n_1 := u\left (\{ v_1\} \right ) \leq n \left ( 1 - \frac{1}{\absl{\opt}} \right ).
\]
With an argument similar to the one presented to analyze the first part of the process, in expected $\bigo {n^2}$ time an individual with fitness value $(n_1,1)$  can be added to the solution. If $\{ v_1\}$ is not an optimal solution, then by Lemma \ref{lemDS1} there exists a node $v_2 \in V \setminus \{v_1\}$ s.t.
 \[
n_2 := u\left (\{ v_1, v_2\} \right ) \leq n \left ( 1 - \frac{1}{\absl{\opt}} \right )^2.
\]
Hence, an individual $(n_2, 2)$ can be added to the current population within expected $\bigo{n^2}$, after a solution with fitness $(n_1,1)$ was added. Similarly, an individual with fitness $(n_i, i)$ can be added to the current population within expected $\bigo{n^2}$, after a solution with fitness $(n_{i-1}, i-1)$ was added. We can iterate this process as indicated in Lemma \ref{lemDS1}, until a dominating set $S_k = \{v_1, \dots , v_k\} \subseteq V$ is found.
\begin{equation}
\label{eq:GSEMOar1}
n_k := u\left (S_k \right ) \leq n \left ( 1 - \frac{1}{\absl{\opt}} \right )^k.
\end{equation}
Note that there are at most $n$ nodes that can be added to the solution $0^n$ as described above. Hence after a total of $\bigo{n^3}$ fitness evaluations the \gsemo reaches a solution $S_k$ as in Lemma \ref{lemDS1} that is a dominating set.

We conclude by showing that the solution $S_k$ yields the desired approximation guarantee. Again, from the definition of minimum dominating set it holds $n \leq \sum_{v \in \opt} \delta (v) + 1$. Combining this observation with \eqref{eq:GSEMOar1}, we can estimate $k$ by solving the inequality
\begin{equation}
\label{eq:MDS}
\frac{\absl{S_k}}{\absl{\opt}} \leq \frac{n}{\absl{\opt}} \left ( 1 - \frac{1}{\absl{\opt}} \right )^{k} \leq  \frac{\sum_{v \in \opt}(\delta (v) + 1)}{\absl{\opt}}\left ( 1 - \frac{1}{\absl{\opt}} \right )^{k}.
\end{equation}
We combine \eqref{eq:MDS} with Lemma \ref{lemma:potential_volume} to obtain that
\[
1 \leq (2ab + 1) \exp \left( - \frac{k}{\absl{\opt}} \right).
\]
By taking the logarithm on both sides, we get $k \leq \absl{\opt}\ln(2ab + 1)$ and we can conclude that the cardinality of $S_k$ is at least $\absl{S_k} \leq \absl{\opt}\ln(2ab + 1)$. The claim follows.
\end{proof}
%
%
%
%
\section{The Minimum Vertex Cover Problem}
Given a graph $G=(V,E)$, a vertex cover is a set $S\subseteq V$ of vertices s.t. each edge in $E$ is adjacent to at least one node in $S$. The minimum vertex cover problem (\vc) consists of finding a vertex cover of minimum size. In this section, we denote with \opt any optimal solution to the \vc, and we use intuitive bit-string representation based on vertices, as discussed in Section \ref{sec:ds}.
%
%
\subsection{Single-objective optimization.}
We study the run time and approximation guarantee for the \ea. Since any vertex cover is also a dominating set, then one can approach the \vc by minimizing a fitness function as in \eqref{eq:fitness_ds}. Moreover, the following run time analysis follows directly from Theorem \ref{Thm1}.
\begin{corollary}
\label{EA_VC}
Let $G=(V, E)$ be a \PLBU network with parameters $\beta>2$, $t\geq 0$ and with a universal constant $c_1$.  Define the constants \abeq  Then the \ea finds a $(2ab)$-approximation of the \vc in expected $\bigo{n \log n}$  fitness evaluations. In particular, the \ea is a $\Theta(1)$-approximation algorithm for the \vc on \PLBU networks.
\end{corollary}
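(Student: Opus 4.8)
The plan is to mirror the proof of Theorem~\ref{Thm1} almost verbatim, replacing the domination requirement by the edge-covering requirement, and then to extract the slightly better constant $2ab$ (instead of $2ab+1$) from the fact that covering an \emph{edge} charges a vertex only $\degr(v)$, rather than the closed-neighbourhood weight $\degr(v)+1$ that domination charges.

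First I would fix the fitness. As suggested after \eqref{eq:fitness_ds}, I take $F(x)=n\,u(x)+\absl{x}_1$, where now $u(x)$ denotes the number of edges left uncovered by the set encoded by $x$; since the weight $n$ dominates any change of $\absl{x}_1$, the \ea first drives $u$ to $0$ (a vertex cover) and never accepts a move that re-exposes a covered edge. The run-time analysis is then identical to Theorem~\ref{Thm1}: as long as $x_t$ is not a vertex cover there is an uncovered edge, and adding either of its endpoints by a single accepted bit-flip (probability at least $1/(en)$) decreases $u$ by at least one. Setting $X^{(t)}=u(x_t)$ and applying the Multiplicative Drift Theorem (Theorem~\ref{thm:multiplicative_drift}) with $S=\{1,\dots,m\}$ yields that a feasible vertex cover is reached after expected $\bigo{n\log n}$ fitness evaluations.

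It remains to bound the approximation ratio. I would first discard isolated vertices (they lie on no edge, hence in no minimal vertex cover, and do not affect the ratio), so that \opt, a minimum vertex cover, is also a dominating set: every vertex outside \opt has all incident edges covered, i.e. a neighbour in \opt. Lemma~\ref{lemma:potential_volume} applied to \opt then gives $\sum_{v\in\opt}(\degr(v)+1)\le(2ab+1)\absl{\opt}$, which rearranges to the key inequality
\[
\sum_{v\in\opt}\degr(v)\le 2ab\,\absl{\opt}.
\]
Because \opt covers every edge, double counting gives $m\le\sum_{v\in\opt}\degr(v)$. Finally, for the cover $x^*$ produced by the algorithm I would invoke a charging argument showing that a \emph{minimal} vertex cover has size at most $m$: each $v\in x^*$ has a private edge $\{v,w\}$ with $w\notin x^*$ (otherwise $v$ is redundant), and distinct cover-vertices yield distinct such edges, so $\absl{x^*}\le m$. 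Chaining these bounds,
\[
\frac{\absl{x^*}}{\absl{\opt}}\le\frac{m}{\absl{\opt}}\le\frac{\sum_{v\in\opt}\degr(v)}{\absl{\opt}}\le 2ab,
\]
which is the claimed ratio, and $2ab=\Theta(1)$ since $\beta=\bigo{1}$.

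The main obstacle is precisely this last ingredient: the $(2ab)$-bound, rather than the weaker $(2ab+1)$ that the verbatim analogue of Theorem~\ref{Thm1} would yield from $\absl{x^*}\le n\le\sum_{v\in\opt}(\degr(v)+1)$, requires that the cover we charge be \emph{minimal}, whereas the drift argument only certifies the \emph{first} feasible cover, which may still carry redundant vertices. I would therefore need a short second phase arguing that, once a cover is found, the monotone non-increase of $\absl{x}_1$ strips redundant vertices (any cover of size exceeding $m$ still has a removable vertex) until $\absl{x^*}\le m$, and I would have to verify that this cleanup also fits the stated run-time budget. The remaining points to check, namely the exact weight in $F$ so that no accepted move trades feasibility for more than $n$ units of size, and the isolated-vertex reduction, are routine but should be stated explicitly.
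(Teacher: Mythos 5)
Your run-time half is correct and matches the paper's in spirit: with $u(x)$ counting uncovered edges, each uncovered edge gets covered with probability at least $1/(en)$ (flip exactly one designated endpoint; such a move is accepted), so by linearity the drift condition of Theorem~\ref{thm:multiplicative_drift} holds with $\delta=1/(en)$, and since $s_0\le m\le n^2$ the bound is still $\bigo{n\log n}$. Where you genuinely depart from the paper is the ratio. The paper offers no separate argument for this corollary at all: it observes that \opt (a minimum vertex cover, implicitly in a graph without isolated vertices) is a dominating set and re-uses the final estimate of Theorem~\ref{Thm1}, i.e.\ $\absl{x^*}\le n\le \sum_{v\in\opt}(\degr(v)+1)\le (2ab+1)\absl{\opt}$. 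Note that this reduction only certifies the constant $2ab+1$; the constant $2ab$ stated in the corollary is never actually derived in the paper. Your chain $\absl{x^*}\le m\le \sum_{v\in\opt}\degr(v)\le 2ab\,\absl{\opt}$ is exactly what one needs to honestly obtain $2ab$, and each link is sound; in particular Lemma~\ref{lemma:potential_volume} bounds $\sum_{v\in\opt}(\degr(v)+1)/\absl{\opt}$ by $2ab+1$, which rearranges to $\sum_{v\in\opt}\degr(v)/\absl{\opt}\le 2ab$ as you say.

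The genuine gap is the one you flag yourself: $\absl{x^*}\le m$ is a property of \emph{minimal} covers, while the drift argument only certifies the \emph{first} cover reached, which can be far from minimal (in a star, the all-ones string is a cover of size $n>m$). The missing idea that closes it is a counting argument making your cleanup phase amenable to multiplicative drift. Once a cover is reached, no accepted offspring is infeasible and $\absl{x}_1$ never increases: an infeasible $y$ has $F(y)\ge n+\absl{y}_1$ while $F(x)=\absl{x}_1\le n$, so $F(y)\le F(x)$ is impossible apart from the degenerate single-edge graph. Moreover, any cover $C$ with $\absl{C}>m$ contains a minimal cover $C'$, which by your private-edge argument has $\absl{C'}\le m$, and \emph{every} vertex of $C\setminus C'$ is individually redundant in $C$ (since $C'\subseteq C\setminus\{v\}$ is already a cover); hence $C$ has at least $\absl{C}-m$ vertices each removable by an accepted one-bit flip, with disjoint events of probability at least $1/(en)$ each. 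Therefore $Z^{(t)}:=\max\{\absl{x_t}_1-m,\,0\}$ satisfies the multiplicative drift condition with $\delta=1/(en)$, and Theorem~\ref{thm:multiplicative_drift} gives expected $\bigo{n\log n}$ further evaluations until $\absl{x_t}_1\le m$, after which your chain yields the ratio $2ab$ within the stated total budget. With this drift argument added (and the remark that excluding isolated vertices is tacitly needed by the paper as well, since otherwise \opt is not a dominating set), your proposal becomes a complete proof, and in fact a more careful one than the paper's own two-line reduction.
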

%
%
\subsection{Multi-objective optimization.}
We prove that the multi-objective \gsemo yields an improved approximation guarantee then the \ea, at the cost of a higher number of expected fitness evaluations. Again, since a vertex cover is also a dominating set, then this problem can be approached by minimizing a two-objectives function as in \eqref{eq:fintess_ds_gsemo}. Hence, the following run time analysis for the \gsemo readily follows from Theorem \ref{GSEMO_DS}.
\begin{theorem}
\label{GSEMO_VC}
Let $G=(V, E)$ be a \PLBU network with parameters $\beta>2$, $t\geq 0$ and with a universal constant $c_1$.  Define the constants \abeq  Then  the \gsemo finds an $(\ln(2ab)+1)$-approximation solution of \vc in expected $\bigo{n^3}$ fitness evaluations. In particular, \gsemo is a $\Theta(1)$-approximation algorithm for the \vc on \PLBU networks.
\end{theorem}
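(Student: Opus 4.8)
The plan is to mirror the proof of Theorem~\ref{GSEMO_DS} almost line for line, replacing the role played by non-dominated nodes with that of uncovered edges. Concretely, I would run the \gsemo on the bi-objective fitness $f = (u(x), \lvert x \rvert_1)$ in which $u(x)$ now counts the number of \emph{edges of $G$ left uncovered} by the vertex set encoded by $x$; this keeps $u(x) = 0$ if and only if $x$ is a vertex cover, exactly as $u(x)=0$ characterised dominating sets before. The two-phase run-time argument then transfers verbatim: in Phase~1 the empty set $0^n$ reaches the Pareto front in expected $\bigo{n^2 \log n}$ evaluations (the front has at most $n+1$ members, and a single bit can be removed from the sparsest individual with probability $\Omega(1/n)$), and in Phase~2 we grow a greedy solution one vertex at a time along the front, each of the at most $n$ improvements costing expected $\bigo{n^2}$ evaluations, for a total of $\bigo{n^3}$.

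The only combinatorial ingredient that must be re-derived is the greedy lemma analogous to Lemma~\ref{lemDS1}. I would argue that if $S_k=\{v_1,\dots,v_k\}$ is built by repeatedly adding a vertex covering the largest number of currently uncovered edges, then the number of uncovered edges after $k$ steps is at most $m\left(1-1/\lvert \opt \rvert\right)^{k}$. This follows from the same pigeonhole step as before: the $\lvert \opt \rvert$ vertices of an optimal cover together cover all remaining uncovered edges, so at least one of them covers a $1/\lvert \opt \rvert$ fraction of them, and hence so does the greedy choice. Because the fitness $f$ stores, for each cardinality, the individual with the fewest uncovered edges, the population of the \gsemo always contains a vertex achieving this greedy improvement, which is what makes Phase~2 go through.

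For the approximation guarantee I would swap the node-volume inequality used for the \ds, namely $n \le \sum_{v \in \opt}(\delta(v)+1)$, for its edge counterpart. Since every edge of $G$ has at least one endpoint in $\opt$, we have $m \le \sum_{v \in \opt}\delta(v)$. A minimum vertex cover dominates every non-isolated vertex of $G$ (any vertex outside the cover sends all of its edges into it), and isolated vertices are irrelevant to both problems, so we may treat $\opt$ as a dominating set and invoke Lemma~\ref{lemma:potential_volume}. This yields $\sum_{v \in \opt}(\delta(v)+1) \le (2ab+1)\lvert \opt \rvert$, whence $\sum_{v \in \opt}\delta(v) \le 2ab\,\lvert \opt \rvert$ and therefore $m \le 2ab\,\lvert \opt \rvert$. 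Feeding this into the greedy bound $u(S_k) \le m\left(1-1/\lvert \opt \rvert\right)^{k} \le 2ab\,\lvert \opt \rvert\, e^{-k/\lvert \opt \rvert}$ and repeating the logarithmic manipulation of Theorem~\ref{GSEMO_DS} (using $\lvert S_k \rvert \ge \lvert \opt \rvert$ since $S_k$ is a cover and $\opt$ is minimum) shows that the first vertex cover $S_k$ appearing on the Pareto front obeys $\lvert S_k \rvert / \lvert \opt \rvert \le \ln(2ab)+1$.

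I expect the main obstacle to be the bookkeeping that converts the ``$+1$ inside the logarithm'' of the dominating-set bound ($2ab+1$) into the ``$+1$ outside the logarithm'' here ($\ln(2ab)+1$). The shift arises because covering edges uses the open-neighbourhood volume $\sum_{v\in\opt}\delta(v)$ rather than the closed-neighbourhood volume $\sum_{v\in\opt}(\delta(v)+1)$, which removes the additive $1$ from inside the logarithm; the remaining additive $1$ then accounts for the final greedy step(s) needed to eliminate the last fewer-than-$\lvert\opt\rvert$ uncovered edges once the multiplicative decay has driven $u(S_k)$ below $\lvert \opt \rvert$. Care is also needed to confirm that a minimum vertex cover is genuinely a dominating set, so that Lemma~\ref{lemma:potential_volume} is applicable; handling isolated vertices explicitly removes this concern.
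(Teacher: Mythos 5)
Your proposal is correct, and it is essentially the proof the paper intends but never writes down: the paper's entire justification for Theorem~\ref{GSEMO_VC} is the one-line remark that a vertex cover is also a dominating set, so the claim ``readily follows'' from Theorem~\ref{GSEMO_DS} with the fitness \eqref{eq:fintess_ds_gsemo}. Taken literally, that shortcut does not work: minimizing the number of non-dominated nodes steers the \gsemo toward small dominating sets, which need not be vertex covers at all, and the bound it would yield is $\ln(2ab+1)$ measured against the minimum \emph{dominating} set rather than the minimum vertex cover. Your edge-based reconstruction --- $u(x)$ counting uncovered edges, the pigeonhole greedy lemma replacing Lemma~\ref{lemDS1}, and the inequality $m \le \sum_{v\in\opt}\delta(v) \le 2ab\,\absl{\opt}$ (via Lemma~\ref{lemma:potential_volume} applied to \opt viewed as a dominating set) replacing $n \le \sum_{v\in\opt}(\delta(v)+1)$ --- is exactly what is needed to produce a genuine vertex cover and the constant $\ln(2ab)+1$ stated in the theorem, with the additive $1$ outside the logarithm coming, as you observe, from the final phase in which at most $\absl{\opt}$ remaining uncovered edges are eliminated one per step; the two-phase $\bigo{n^3}$ run-time argument transfers verbatim. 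You are also more careful than the paper about isolated vertices, without which \opt need not dominate $V$; your fix of passing to the subgraph of non-isolated vertices is adequate, up to the pedantic caveat that deleting isolated vertices shrinks $n$ and can therefore degrade the \PLBU constant $c_1$ by a factor of $n/n'$, a point the paper ignores entirely.
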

%
%
%
%
\section{The Minimum Connected Dominating Set Problem}
\label{sec:MCDSs-objsec}
Given a graph $G=(V,E)$, recall that a connected dominating set is a set $D\subseteq V$ of vertices that dominates every node $v\in V$, and the nodes $u\in D$ induce a connected sub-graph on $G$. The minimum connected dominating set problem (\cds) consists of finding a connected dominating set $C\subseteq V$ of minimum size. In this section, we denote with \opt any optimal solution to the \cds. Throughout this section, we use intuitive bit-string representation based on vertices, as discussed in Section \ref{sec:ds}.
%
%
\subsection{Single-objective optimization.}
\label{sec:MCDSs-obj}
We study the optimization process for the \cds with the \ea. We approach this problem by minimizing following fitness function.
\begin{equation}
\label{eq:fitness_single_objective cds}
F(x)=n^2(u(x)+ w(x) - 1)+\rvert x\lvert _1,
\end{equation}
where $u(x)$ is the number of non-dominated nodes, and $w(x)$ is the number of connected components in the sub-graph induced by the chosen solution. Note that any connected dominating set $C$ yields $u(C) + w(C) - 1 = 0$. We search for a solution to the \cds by minimzing $F(x)$. It can be observed that a sub-graph induced by any subset $D\subseteq V$ that is represented by the input string $x$ is a connected dominating set if and only if $u(x)=0$ and $w(x)=1$. Because of the weights on the fitness function, the \ea tends to reach a feasible solution first, and then it removes unnecessary nodes.

The argument for the theorem below is similar to that given in Theorem \ref{Thm1}:  We first show that the \ea reaches a locally optimal solution in $\bigo{n \log n}$ fitness evaluations, and then we use the \PLBU property to show that any such solution gives the desired approximation ratio (see Corollary \ref{cor:cds}). The following theorem holds.
\begin{theorem}\label{CDSEA}
Let $G=(V, E)$ be a \PLBU network with parameters $\beta>2$, $t\geq 0$ and with a universal constant $c_1$.  Define the constants \abeq  Then the \ea finds a $(2ab)$-approximation of the \cds in expected $\bigo{n \log n}$  fitness evaluations. In particular, the \ea is a $\Theta(1)$-approximation algorithm for the \cds on \PLBU networks.
\end{theorem}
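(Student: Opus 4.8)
The plan is to follow the two-phase template of Theorem \ref{Thm1}. Because the infeasibility term of the fitness carries the weight $n^2$ while $\absl{x}_1 \le n$, no accepted step can ever increase $u(x) + w(x) - 1$, so the \ea first searches for a feasible solution (a connected dominating set, characterised by $u(x) = 0$ and $w(x) = 1$) and only afterwards trims its size. I would therefore define the Markov chain $X^{(t)} := u(x_t) + w(x_t) - 1$, which ranges over $\{0, 1, \dots, \bigo{n}\}$ and equals $0$ exactly on feasible solutions, and apply the Multiplicative Drift theorem (Theorem \ref{thm:multiplicative_drift}) to bound the expected time until $X^{(t)} = 0$. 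Once feasibility is reached I would read the approximation ratio off the \PLBU structure, with no separate minimisation phase needed.

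For the drift I would reuse the mechanism behind Theorem \ref{Thm1}: it suffices to exhibit, from every infeasible configuration with $X^{(t)} = s$, a family of $\Omega(s)$ mutually exclusive single-bit mutations, each selected with probability at least $1/(en)$ and each decreasing $u + w - 1$ by at least one; summing over the family yields $\mathbb{E}[X^{(t)} - X^{(t+1)} \mid X^{(t)} = s] \ge s/(en)$, and with $\delta = 1/(en)$ Theorem \ref{thm:multiplicative_drift} gives the $\bigo{n \log n}$ bound. For the domination part this is clean, in the same spirit as Theorem \ref{Thm1}: inserting a non-dominated node that still has a non-dominated neighbour lowers $u$ by at least two while raising $w$ by at most one, so the potential strictly drops.

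The hard part, and the point where the analogy with Theorem \ref{Thm1} breaks down, is the interaction with the connectivity term $w$. Already an isolated non-dominated node is a warning: inserting it lowers $u$ by one but raises $w$ by one, leaving $u + w - 1$ unchanged, so the clean ``one disjoint improving flip per unit of potential'' accounting does not transfer verbatim. The genuinely delicate case is a configuration that is a (near-minimal) dominating set but disconnected, i.e. $u = 0$, $w > 1$: reducing $w$ needs a single flip that merges two components without undominating any vertex, yet two components of a dominating set may only be joinable through a bridging path whose vertices individually leave $u + w - 1$ unchanged. I would attack this via a structural argument on the quotient graph of components, producing either a vertex adjacent to two distinct components, or a redundant singleton component whose deletion lowers $w$ while preserving domination, and then argue that enough improving flips coexist to sustain the drift. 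I expect this existence statement to be the main obstacle, since one must exclude configurations in which only multi-bit moves make progress, which would push the run time past $\bigo{n \log n}$.

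Finally, the approximation guarantee is immediate once feasibility holds, so this phase is routine. Let $x^*$ be the first connected dominating set reached. Trivially $\absl{x^*} \le n$, and since an optimal connected dominating set $\opt$ is in particular dominating, its closed neighbourhoods cover $V$, giving $n \le \sum_{v \in \opt}(\delta(v) + 1)$. Hence
\[
\frac{\absl{x^*}}{\absl{\opt}} \le \frac{\sum_{v \in \opt}(\delta(v) + 1)}{\absl{\opt}} \le 2ab + 1 = \Theta(1),
\]
where the last step is Corollary \ref{cor:cds}; thus already the first feasible solution is a constant-factor approximation and the total run time is dominated by the feasibility phase.
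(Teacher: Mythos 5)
Your framework---two phases, multiplicative drift applied to the potential $X^{(t)}=u(x_t)+w(x_t)-1$, and reading the approximation ratio off Corollary~\ref{cor:cds} at the first feasible solution---is exactly the paper's, and your final (approximation) phase is complete and matches the paper's own derivation: both arrive at the ratio $2ab+1$ (the bare $2ab$ in the theorem statement is an inconsistency already present in the paper). The genuine gap is the one you flag yourself and then leave open: you never establish the drift condition. You correctly state what is needed (a family of $\Omega(s)$ mutually exclusive accepted single-bit flips, each decreasing $u+w-1$ by at least one, from every state with potential $s$), you correctly observe that the connectivity term $w$ breaks the accounting that works for plain domination, and the promised ``structural argument on the quotient graph of components'' is exactly the missing piece. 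Without it there is no $\delta$ to feed into Theorem~\ref{thm:multiplicative_drift}, and the $\bigo{n\log n}$ bound is unproven.

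You should know that this obstacle is not an artifact of your write-up: it is fatal to any single-bit-flip drift argument, and it is precisely the step the paper's own proof asserts without justification (``with probability at least $1/en$ a node is added to $x_t$ s.t.\ the value $f(x_t)$ decreases by $1$''). That assertion is false. Components of a dominating set can sit at pairwise distance exactly $3$, and then no single-bit flip is even accepted. Concretely, take the path $v_1v_2v_3v_4v_5v_6$ (which satisfies Definition~\ref{def:plbu} with, e.g., $\beta=5/2$, $t=0$, $c_1=5$) and the solution $x=\{v_2,v_5\}$: here $u(x)=0$, $w(x)=2$, so $F(x)=n^2+2$. Adding any of $v_1,v_3,v_4,v_6$ leaves $w=2$ and increases $\absl{x}_1$, hence strictly increases $F$ and is rejected; removing $v_2$ (or $v_5$) creates three non-dominated vertices while lowering $w$ by only one, and is also rejected. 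The only improving moves flip at least two bits (e.g., adding $v_3$ and $v_4$ together), so the one-step drift at this state is $\Theta(1/n^2)$, not $\Omega(X^{(t)}/n)$. Consequently your anticipated ``configurations in which only multi-bit moves make progress'' cannot be excluded; a correct argument must either prove such states are never reached (they can be) or work with two-bit flips, which yields a bound of order $n^2\log n$ rather than $n\log n$. In short, your proposal and the paper's proof fail at the same point---but your diagnosis of that point is accurate, whereas the paper silently assumes it away.
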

\begin{proof}
To estimate the run time we use the multiplicative Drift theorem (see Theorem \ref{thm:multiplicative_drift}). Define the function $f(x) = u(x) + w(x) - 1$, with $u(x)$ and $w(x)$ as in \eqref{eq:fitness_single_objective cds}, and let $x_t$ be a solution found at time step $t$. Suppose that the current solution $x_t$ is not feasible. Then with probability atl east $1/en$ a node is added to $x_t$ s.t. the value $f(x_t)$ decreases by $1$. Hence, if we define $\{X_t \}_{t \geq 0}$ as the process $X_t = u(x_t)$ for all $t \geq 0$, then it holds
\[
\mathbb{E}[X_t - X_{t - 1} \mid X_t] \geq \frac{X_t}{e n}.
\]
Following the notation of Theorem~\ref{thm:multiplicative_drift}, we set $S = \{ 1, \dots, n \}$ to obtain that the run time until a feasible solution is found is upper-bounded as $\bigo{n \log n}$.

We conclude by proving that the \ea reaches the desired approximation guarantee. Denote with $x^*$ the first feasible solution reached by the \ea. We observe that any optimal solution \opt dominates all nodes of $G$. It follows that it holds
\begin{equation}
\label{eq:last_expected_decrease2}
\frac{\absl{x^*}}{\absl{\opt}} \leq \frac{\sum_{v \in \opt} (\delta(v) + 1)}{\absl{\opt}} \leq 2ab + 1,
\end{equation}
with the last inequality in \eqref{eq:last_expected_decrease2} following from Corollary \ref{cor:cds}. The claim follows.
\end{proof}
%
%
\subsection{Multi-objective optimization.}
As in the case of the \ds and the \vc, we again analyze the run time of the \gsemo. We search for a solution of the \cds by minimizing the following multi-objective fitness
\begin{equation}
\label{eq:fitness_cds_multi_obj}
F(x) = (u(x) + w(x), \absl{x}_1),
\end{equation}
with $u(x)$ and $w(x)$ as described earlier in this section. The following result is useful for the run time analysis.
\begin{lemma}
\label{lemCDS1}
Let $G=(V, E)$ be a connected graph, and consider the function $f(x) := u(x) +w(x)$, with $u(x), w(x)$ as in \eqref{eq:fitness_cds_multi_obj}. Let $S_k := \{v_1, \dots, v_k\}$ be a set of nodes of $G$ s.t. 
\[
v_j := \argmax_{u \in V \setminus \{v_1, \dots, v_{j  - 1}\}} \left \{ f(S_k\setminus \{v_k\}) - f(S_k) \right \},
\]
for all $j = 1, \dots, k$. Then there exists a node $v \in V \setminus S_k$ s.t. it holds that
\[
f(S_k\cup \{v\}) \leq f(S_k)- \frac{f(S_k)}{\absl{\opt}} + 1.
\]
\end{lemma}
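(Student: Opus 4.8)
The plan is to reduce the existence claim to a simple averaging argument over the vertices of an optimal solution $\opt$. Writing $f = u + w$, recall that $\opt$ is a connected dominating set, so $u(\opt)=0$ and $w(\opt)=1$; moreover, since $\opt$ dominates $V$, every vertex of $S_k$ either lies in $\opt$ or is adjacent to $\opt$, hence $G[S_k\cup\opt]$ is connected and $f(S_k\cup\opt)=1$. I will establish the stronger inequality
\[
\sum_{o\in\opt}\bigl(f(S_k)-f(S_k\cup\{o\})\bigr)\ \ge\ f(S_k)-\absl{\opt}.
\]
From this the lemma follows at once: the largest summand is at least the average $\frac{f(S_k)-\absl{\opt}}{\absl{\opt}}=\frac{f(S_k)}{\absl{\opt}}-1$, and whenever this quantity is positive the maximizing vertex $o$ must lie outside $S_k$, because a vertex already in $S_k$ contributes exactly $0$; taking $v:=o$ then proves the statement.

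To prove the displayed inequality I handle the two coordinates of $f$ separately. For the domination term, the number of dominated vertices $\mathrm{dom}(S)=\absl{\bigcup_{v\in S}N[v]}$ is a monotone submodular coverage function, so the subadditivity of singleton marginals gives $\sum_{o\in\opt}\bigl(\mathrm{dom}(S_k\cup\{o\})-\mathrm{dom}(S_k)\bigr)\ge \mathrm{dom}(S_k\cup\opt)-\mathrm{dom}(S_k)=u(S_k)$, since $\opt$ dominates everything. Equivalently, $\sum_{o\in\opt}\bigl(u(S_k)-u(S_k\cup\{o\})\bigr)\ge u(S_k)$.

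The component term is the crux, because $w$ is \emph{not} submodular. Here I use the elementary identity $w(S_k)-w(S_k\cup\{o\})=c(o)-1$, where $c(o)$ is the number of connected components of $G[S_k]$ that $o$ meets (contains or is adjacent to). The key combinatorial observation is $\sum_{o\in\opt}c(o)\ge w(S_k)$: by double counting this sum equals $\sum_{K}\absl{\{o\in\opt : o \text{ meets } K\}}$ over the components $K$ of $G[S_k]$, and each such $K$ is met by at least one optimal vertex, since choosing any $x\in K$ and using that $\opt$ dominates $x$ produces an optimal vertex incident to $K$. Therefore $\sum_{o\in\opt}\bigl(w(S_k)-w(S_k\cup\{o\})\bigr)=\sum_{o\in\opt}(c(o)-1)\ge w(S_k)-\absl{\opt}$, and adding the two coordinate bounds yields the displayed inequality.

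Finally I will dispose of the degenerate regime $f(S_k)\le\absl{\opt}$, in which $\frac{f(S_k)}{\absl{\opt}}-1\le 0$ and the averaging step might only select a vertex of $S_k$. Here I invoke the monotonicity of $f$ under single insertions: adding any $v\in V\setminus S_k$ leaves $u$ non-increasing and raises $w$ by at most one, and a short case analysis on whether $v$ was already dominated shows $f(S_k\cup\{v\})\le f(S_k)$ in every case; since the target value $f(S_k)-\frac{f(S_k)}{\absl{\opt}}+1$ is at least $f(S_k)$ throughout this regime, any $v$ works. I expect the component-counting bound $\sum_{o\in\opt}c(o)\ge w(S_k)$ to be the main obstacle, as it is the only point where the non-submodular connectivity term must be controlled, and it is precisely the slack $-\absl{\opt}$ generated there that the additive $+1$ in the statement absorbs.
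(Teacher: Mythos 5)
Your proof is correct, but it does not mirror anything in the paper: the paper never proves Lemma~\ref{lemCDS1}, it simply cites Theorem~3.3 of Guha and Khuller \cite{GK98}, where the claim arises inside the analysis of a greedy algorithm via a ``piece''-counting argument. What you give is a self-contained replacement, and every step checks out. Your decomposition of $f=u+w$ is sound: the domination term is handled by submodularity of the coverage function $\mathrm{dom}(S)=n-u(S)$ together with $\mathrm{dom}(S_k\cup\opt)=n$, and the component term --- which, as you correctly note, is genuinely not submodular --- is handled by the identity $w(S_k)-w(S_k\cup\{o\})=c(o)-1$ and the double-counting bound $\sum_{o\in\opt}c(o)\geq w(S_k)$, which holds because $\opt$ dominates a representative of every component of $G[S_k]$. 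The averaging step, the observation that vertices of $\opt\cap S_k$ have marginal exactly $0$ (so a strictly positive maximizer automatically lies outside $S_k$), and the fallback regime $f(S_k)\leq\absl{\opt}$ (where $f$ is non-increasing under any single insertion, by your case analysis on whether the inserted vertex was already dominated) are all handled correctly. Two further remarks: your argument never uses the greedy construction of $S_k$ at all, so you in fact prove the statement for an \emph{arbitrary} set $S_k$ --- which is what the application in Theorem~\ref{GSEMO_CDS} actually needs, and which sidesteps the garbled $\argmax$ condition in the lemma as printed; and the only hidden hypothesis is $V\setminus S_k\neq\emptyset$ in the degenerate case, which is harmless because if $G$ is connected and $S_k=V$ then $S_k$ is already a connected dominating set, the one situation in which the lemma is never invoked, though it would be worth saying so explicitly.
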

A proof of the lemma above is given by Guha and Khuller \cite[Theorem~3.3]{GK98}, where it is used to analyze the performance of a greedy algorithm. The following theorem holds.
\begin{theorem}\label{GSEMO_CDS}
Let $G=(V, E)$ be a connected \PLBU network with parameters $\beta>2$, $t\geq 0$ and with a universal constant $c_1$.  Define the constants \abeq  Then the \gsemo finds a $\ln (2 e a b + e)$-approximation for the \cds after expected $\bigo{n^3}$ fitness evaluations. In particular, the \gsemo is a $\Theta(1)$-approximation algorithm for the \cds on \PLBU networks.
\end{theorem}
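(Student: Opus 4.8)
The plan is to mirror the two-phase analysis of Theorem \ref{GSEMO_DS}, replacing the dominating-set potential by the connected-dominating-set potential $f(x)=u(x)+w(x)$, replacing Lemma \ref{lemDS1} by the greedy guarantee of Lemma \ref{lemCDS1}, and replacing Lemma \ref{lemma:potential_volume} by Corollary \ref{cor:cds}. In Phase 1 I would bound the time until the empty set $0^n$ enters the Pareto front $P$. Exactly as in Theorem \ref{GSEMO_DS}, if $x^*$ denotes the member of smallest cardinality then it is selected with probability $\ge 1/(n+1)$ (since $\absl{P}\le n+1$) and a single $1\to0$ flip occurs with probability $\ge \absl{x^*}_1/(en)$; summing the waiting times $\le (n+1)\,en/j$ over the at most $n$ removals yields expected $\bigo{n^2 \log n}$ until $0^n$ is stored. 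Because $0^n$ attains the minimum possible value $0$ of the cardinality objective, it is never dominated and remains in $P$.

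In Phase 2 I would argue that $P$ tracks Guha and Khuller's greedy chain $\emptyset=S_0\subset S_1\subset\cdots$. Whenever $P$ contains the greedy set $S_j$, selecting it (probability $\ge 1/(n+1)$) and flipping exactly the bit of the maximizing vertex $v_{j+1}$ from Lemma \ref{lemCDS1} (probability $\ge 1/(en)$) produces $S_{j+1}$ in expected $\bigo{n^2}$ steps; as long as $f(S_j)>\absl{\opt}$ this $S_{j+1}$ has strictly smaller first objective and strictly larger cardinality, hence is incomparable to $S_j$ and is inserted, so the whole chain is retained. Iterating over the at most $n$ greedy steps gives the claimed $\bigo{n^3}$ budget.

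For the ratio I would unroll the recurrence $f(S_{k+1})\le f(S_k)\left(1-1/\absl{\opt}\right)+1$ of Lemma \ref{lemCDS1}, with $f(S_0)=f(\emptyset)=n$, to obtain $f(S_k)\le n\left(1-1/\absl{\opt}\right)^k+\absl{\opt}$. Since a connected dominating set is a dominating set, Corollary \ref{cor:cds} applies and gives $n\le \sum_{v\in\opt}(\delta(v)+1)\le (2ab+1)\absl{\opt}$; combined with $\left(1-1/\absl{\opt}\right)^k\le e^{-k/\absl{\opt}}$ this yields $f(S_k)\le (2ab+1)\absl{\opt}\,e^{-k/\absl{\opt}}+\absl{\opt}$, so that after $k\le \absl{\opt}\ln(2eab+e)$ inserted vertices the greedy reaches a connected dominating set, i.e. $\absl{S_k}/\absl{\opt}\le \ln(2eab+e)$.

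The main obstacle will be the additive $+1$ in Lemma \ref{lemCDS1}, which is the price of the connectivity term $w(x)$ and forces the potential to level off near $\absl{\opt}$ rather than at $1$. I would therefore split the descent into a geometric phase (while $f(S_k)\gg\absl{\opt}$, controlled by the recurrence above, reaching $f(S_k)\le 2\absl{\opt}$ after $\le \absl{\opt}\ln(2ab+1)$ steps) and a finishing phase that drives $f$ down to $1$; the additive constant of the finishing phase must be tuned so that the totals collapse to $\absl{\opt}\bigl(1+\ln(2ab+1)\bigr)=\absl{\opt}\ln(2eab+e)$. A second, related difficulty is that once $f(S_k)=\Theta(\absl{\opt})$ a single inserted vertex may fail to decrease $u(x)+w(x)$ (merging two distant components can require two connector vertices at once), so the claim that $P$ keeps advancing must allow the occasional $2$-bit flip, whose probability $\Theta(1/n^2)$ still keeps the expected budget within $\bigo{n^3}$. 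Establishing connectivity progress and pinning down the exact constant is where the real work lies; the runtime bound and the invocation of Corollary \ref{cor:cds} are routine adaptations of the dominating-set analysis.
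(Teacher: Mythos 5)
Your proposal follows the paper's own proof essentially step for step: the same Phase~1 argument giving expected $\bigo{n^2\log n}$ evaluations until $0^n$ enters the population, the same Phase~2 tracking of the greedy chain of Lemma \ref{lemCDS1} at expected cost $\bigo{n^2}$ per insertion and $\bigo{n^3}$ overall, and the same ratio derivation by unrolling $f(S_{k+1})\le f(S_k)\left(1-1/\absl{\opt}\right)+1$ and invoking Corollary \ref{cor:cds} to bound $f(S_0)/\absl{\opt}$ by $2ab+1$, arriving at $k\le\absl{\opt}\ln(2eab+e)$.

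One remark worth making: the two difficulties you flag at the end as ``the real work'' are not artifacts of your write-up; they are precisely the points the paper's proof passes over. The paper unrolls the recurrence to $f(S_k)\le (n+1-\absl{\opt})\left(1-1/\absl{\opt}\right)^k+\absl{\opt}$, divides by $\absl{\opt}$, and then ``solves'' the resulting inequality \eqref{eq:cds_gsemo_label} for $k$ --- but that inequality (which moreover carries a typo, $+\absl{\opt}$ where $+1$ is meant) holds for \emph{every} $k$, exactly because, as you observe, the recurrence levels off at $\absl{\opt}$ rather than descending to $1$; the bound $k\le\absl{\opt}\ln(2eab+e)$ is asserted rather than derived, and no separate ``finishing phase'' is analyzed. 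Likewise, the paper never verifies that each greedy insertion strictly decreases $u+w$, which is needed for \gsemo to accept $S_{j+1}$ instead of rejecting it as dominated by $S_j$; your observation that progress may require merging components (hence multi-bit flips, still within the $\bigo{n^3}$ budget) addresses a gap the paper leaves silent. So your plan reproduces the published argument, and your list of remaining obstacles is an accurate diagnosis of what a fully rigorous version of that argument would still have to supply.
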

\begin{proof}
The hereby presented argument is similar to the one given in Theorem~\ref{GSEMO_DS}. Since the function $\absl{x}_1$ is an objective, then there are at most $n+1$ individuals in the Pareto front, at each point during the iteration. Using this knowledge, and following an argument as in (Phase 1) in Theorem \ref{GSEMO_DS}, one can prove that the solution $0^n$ is added to the Pareto front after expected
\[
\sum_{j = 1}^{n}\left (\frac{j}{e(n+1)n}\right )^{-1} = \bigo{n^2 \log n}
\]
fitness evaluations. 

Once the empty set is reached, the \gsemo iteratively generates solutions with increasing cardinality, until a good approximation of a minimum dominating set is reached. Define $f(x) = u(x) + w(x)$, and denote with $S_j = \{v_1, \dots, v_j\}$ a sequence of nodes s.t.
\[
v_j := \argmax_{u \in V \setminus \{v_1, \dots, v_{j  - 1}\}} \left \{ f(S_k\setminus \{v_k\}) - f(S_k) \right \},
\]
until a connected dominated set $S_k$ is reached (see Lemma \ref{lemCDS1}). Again, selecting an individual for mutation and adding a chosen node to it occurs after expected $\bigo{n^2}$ fitness evaluations. Since there are at most $n$ nodes that can be added to the empty set $0^n$, then after expected $\bigo{n^3}$ fitness evaluations the \gsemo reaches a solution $S_k$ that is a dominating set.

We conclude by showing that $S_k$ yields the desired approximation guarantee. From Lemma \ref{lemCDS1} it holds
\begin{align*}
f(S_k) &\leq f(S_0)\left(1-\frac{1}{\lvert \opt \rvert }\right)^k+\sum_{j=0}^{k-1}\left(1-\frac{1}{\lvert \opt \rvert }\right)^j\\
& \leq f(S_0)\left(1-\frac{1}{\lvert \opt \rvert }\right)^k+\absl{\opt}\left(1-\left(1-\frac{1}{\lvert \opt \rvert }\right)^k \right)\\
&\leq \left ( f(S_0)-\lvert \opt \rvert \right )\left(1-\frac{1}{\lvert \opt \rvert }\right)^k +\lvert \opt \rvert \\
& = \left ( n+1-\lvert \opt \rvert \right )\left(1-\frac{1}{\lvert \opt \rvert }\right)^k +\lvert \opt \rvert.
\end{align*}
Note also that from Corollary \ref{cor:cds} it holds
\[
\frac{n + 1 -\absl{\opt}}{\absl{\opt}} \leq \frac{\sum_{x\in \opt} (\delta(x)+1)}{\absl{\opt}}\leq 2ab + 1,
\]
and the following chain of inequalities also holds
\begin{equation}
1 \leq \frac{f(S_k)}{\absl{\opt}} \leq (2ab + 1)\left(1-\frac{1}{\lvert \opt \rvert }\right)^{k} + \absl{\opt}. \label{eq:cds_gsemo_label}
\end{equation}
If we solve \eqref{eq:cds_gsemo_label} w.r.t. $k$, then it follows that $k \leq \absl{\opt} \ln (2 e a b + e)$. Since $k = \absl{S_k}$, then we conclude that $\absl{S_k}/\absl{\opt} \leq \ln (2 e a b + e)$, and the claim follows.
\end{proof}
%
%
%
%
\section{The Maximum Independent Set Problem} \label{sec:mis}
For a given graph $G=(V,E)$, recall that an independent set is a subset $S\subseteq{V}$ s.t. no two nodes $u,v\in S$ are adjacent. The maximum independent set problem (\is) consists of finding an independent set $S\subseteq V$ of maximum cardinality. In this section, we denote with \opt any solution for the \is. In this section, we use a standard bit-string representation based on vertices, as discussed in Section \ref{sec:ds}.
%
%
\subsection{Single-objective optimization.}
To implement the \ea, we use the fitness function proposed by B\"{a}ck et al.~\cite{BK94}, which is defined as follows:
\begin{equation}
F(x)=\lvert x\rvert _1 - n\sum_{i=1}^n x_i \sum_{j=1}^n x_j e_{ij},\label{eq:maximum_indipendent_set_fitness}
\end{equation}
where $e_{ij}$ is $1$ if there is an edge between $v_i$ and $v_j$, and it is $0$ otherwise. In this case, the objective of the \ea is to maximize $F(x)$. Note that it holds that~$ \sum_{i=1}^n x_i \sum_{j=1}^n x_j e_{ij} =0$ if and only if the solution is an independent set. Because of the weight $n$, the \ea reaches a feasible solution first and then it adds nodes to it, to maximize the fitness. 

In order to perform the analysis for the \ea, we use the following definition of local optimality.
\begin{definition}
\label{def:3_bit_optimality}
Let $G=(V, E)$ be a graph, and let $F$ be as in \eqref{eq:maximum_indipendent_set_fitness}. We say that a set of nodes $S \subseteq V$ is a $3$-local optimum of it holds
\[
F((S \setminus U) \cup T) \leq F(S)
\]
for all $U \subseteq S$ and $T \subseteq V \setminus S$ s.t. $\absl{U \cup T} \leq 3$.
\end{definition}
Definition \ref{def:3_bit_optimality} captures the notion of optimality up to three bit-flips. Using this definition, we prove the following result. 
\begin{theorem}\label{lem:LS}
Let $G=(V, E)$ be a connected \PLBU network with parameters $\beta>2$, $t\geq 0$ and with a universal constant $c_1$.  Define the constants \abeq  Then the \ea finds an $(ab + 1/2)$-approximation for the \is after expected $\bigo{n^4}$ fitness evaluations. In particular, the \ea is a $\Theta(1)$-approximation algorithm for the \is on \PLBU networks.
\end{theorem}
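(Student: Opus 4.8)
The plan is to mirror the structure of Theorem~\ref{Thm1} and Theorem~\ref{CDSEA}: first bound the expected time until the \ea reaches a $3$-local optimum (Definition~\ref{def:3_bit_optimality}), and then show that \emph{every} such optimum is an $(ab+1/2)$-approximation by exploiting the \PLBU property. The reason the run time jumps to $\bigo{n^4}$, rather than the $\bigo{n\log n}$ of the covering problems, is that escaping a suboptimal feasible solution may require a simultaneous three-bit flip (delete one vertex, insert two), which the mutation operator produces only with probability $\Omega(1/n^3)$.

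For the run time I would split the process into two phases. In the first phase the \ea drives the penalty term to zero: whenever the current string encodes a set with an internal edge, deleting either endpoint is a single bit-flip that strictly increases $F$ (it loses one unit of $\absl{x}_1$ but saves at least $n$ from the penalty $n\sum_i x_i\sum_j x_j e_{ij}$), so such a move is always accepted. A multiplicative-drift argument (Theorem~\ref{thm:multiplicative_drift}) on the number of internal edges, as in Theorem~\ref{Thm1}, yields an independent set in expected $\bigo{n\log n}$ steps, after which any infeasible offspring is rejected. In the second phase the solution is a feasible independent set with $F=\absl{x}_1\le n$. As long as it is not a $3$-local optimum, Definition~\ref{def:3_bit_optimality} guarantees a flip of at most three bits that strictly increases $F$; this flip occurs with probability at least $(1/n)^3(1-1/n)^{n-3}=\Omega(1/n^3)$, so it is produced within expected $\bigo{n^3}$ iterations. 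Since each accepted improving flip raises the integer-valued $F$ by at least one and $F$ is capped at $n$, there are at most $n$ such improvements, giving a total of $\bigo{n^4}$ fitness evaluations.

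The core of the argument is the approximation guarantee, and here the key observation is that a $3$-local optimum $S$ is in particular a \emph{maximal} independent set (no single vertex can be added), hence a dominating set, so Lemma~\ref{lemma:potential_volume} applies directly to $S$ and gives $\sum_{u\in S}(\degr(u)+1)\le(2ab+1)\absl{S}$. It therefore suffices to prove the purely combinatorial bound $\absl{\opt}\le\tfrac12\sum_{u\in S}(\degr(u)+1)$, which I would establish by a fractional charging scheme. Writing $S'=S\setminus\opt$ and $O'=\opt\setminus S$, maximality forces every $v\in O'$ to have a neighbour in $S'$ (a neighbour in $\opt\cap S$ is impossible since $\opt$ is independent); let each such $v$ send charge $1/\absl{N(v)\cap S'}$ to each of its neighbours in $S'$, so that the total charge equals $\absl{O'}$. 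The $2$-improvement part of $3$-local optimality is exactly the statement that no $u\in S'$ has two \emph{private} neighbours in $O'$ (otherwise removing $u$ and inserting those two non-adjacent vertices would be an improving three-bit swap); hence $u$ receives charge $1$ from at most one private neighbour and at most $1/2$ from each of its remaining $\le\degr(u)-1$ neighbours in $O'$, for a total of at most $(\degr(u)+1)/2$. Summing over $S'$ gives $\absl{O'}\le\tfrac12\sum_{u\in S'}(\degr(u)+1)$, and since the graph is connected every $u\in\opt\cap S$ has $\degr(u)\ge1$ and thus contributes $1\le(\degr(u)+1)/2$; adding these yields $\absl{\opt}=\absl{\opt\cap S}+\absl{O'}\le\tfrac12\sum_{u\in S}(\degr(u)+1)$.

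Combining the two pieces gives $\absl{\opt}/\absl{S}\le\tfrac12(2ab+1)=ab+1/2$, which is the claimed ratio, and is exactly half of the bound of Corollary~\ref{cor:mis}, reflecting the one-for-two trade enabled by swaps. The main obstacle is the charging step: one must verify carefully that $3$-local optimality in the sense of Definition~\ref{def:3_bit_optimality} genuinely rules out two private neighbours, and that the degenerate vertices in $\opt\cap S$ can be absorbed using connectivity. The two ideas that make the \PLBU bound close the argument are the per-vertex charge cap of $(\degr(u)+1)/2$ and the reduction to a dominating-set degree bound on $S$ \emph{itself} via Lemma~\ref{lemma:potential_volume}, rather than on the unknown optimum $\opt$.
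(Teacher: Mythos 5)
Your proposal is correct and follows the same overall route as the paper: a two-phase run time analysis (multiplicative drift to feasibility in $\bigo{n \log n}$ evaluations, then at most $n$ improving flips of up to three bits, each found in expected $\bigo{n^3}$ steps, giving $\bigo{n^4}$), followed by an approximation argument whose two pillars are (i) a \threeopt $S$ is a \emph{maximal} independent set, hence a dominating set, so the \PLBU machinery bounds $\sum_{v\in S}(\delta(v)+1) \leq (2ab+1)\absl{S}$, and (ii) $3$-flip optimality forbids any $u \in S$ from having two private neighbours in $\opt \setminus S$, which yields $2\absl{\opt} \leq \sum_{v\in S}(\delta(v)+1)$.

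The only genuine difference is in how step (ii) is accounted for, and there your version is the more rigorous one. The paper counts edges between $S$ and $\opt\setminus I$ (with $I = S \cap \opt$, following Khanna et al.) and writes the chain $2\absl{\opt}/\absl{S} \leq \bigl(\absl{S}-\absl{I}+2(\absl{\opt}-\absl{S})\bigr)/\absl{S} \leq \cdots$, whose first inequality is false as literally stated (it would require $\absl{S}+\absl{I}\leq 0$); the paper also asserts that a \threeopt is a \emph{maximum} independent set in order to invoke Corollary \ref{cor:mis}, when it is only maximal. Your fractional charging scheme repairs both defects: the per-vertex cap of $(\delta(u)+1)/2$ on charge received (one private neighbour at charge $1$, the rest at charge at most $1/2$) gives $\absl{\opt\setminus S} \leq \tfrac12\sum_{u\in S\setminus \opt}(\delta(u)+1)$ cleanly, the connectivity assumption is used explicitly and exactly where it is needed (to absorb the vertices of $\opt\cap S$, each of degree at least $1$), and Lemma \ref{lemma:potential_volume} is applied to $S$ directly as a dominating set rather than through the mislabelled corollary. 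So your proof should be read as a corrected rendering of the paper's argument rather than a different one; the final bound $\absl{\opt}/\absl{S} \leq \tfrac12(2ab+1) = ab+1/2$ and the $\bigo{n^4}$ run time coincide with the paper's.
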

\begin{proof}
We first perform a run time analysis for the \ea, until a $3$-local optimum is found (see Definition \ref{def:3_bit_optimality}), and then we prove that any such solution yields the desired approximation guarantee.

To perform the run time analysis, we divide the optimization process in two phases. In (Phase 1) the \ea searches for a feasible solution, whereas in (Phase 2) it searches for a $3$-local optimum, given that a feasible solution has been found.\\
(Phase 1) Let $x_0$ be the initial solution of the \ea and assume that $x_0$ is not feasible. Since the empty set is always an independent set, then we can use an argument as in Theorem \ref{Thm1} (Phase 1) to conclude that the \ea reaches a feasible solution after $\bigo{n \log n}$ fitness evaluations.\\
(Phase 2) After a feasible solution is reached, then \ea searches for a $3$-local optimum, within the portion of the search space that consists of feasible solutions. In this phase the \ea maximizes the function $\absl{x}_1$. We call \emph{favorable} any move up to a $3$-bit flip that yields an improvement on $\absl{x}_1$. The probability that the \ea performs a favorable move is $1/(en)^3$. Hence, it is always possible to obtain an improvement on the fitness of at least $1$, after expected $\bigo{n^3}$ fitness evaluations, unless the current solution is a $3$-local optimum. Note that after at most a linear number of favorable moves the \ea reaches a $3$-local optimum. It follows that the \ea finds a $3$-local optimum as in Definition \ref{def:3_bit_optimality} after expected $\bigo{n^4}$ fitness evaluations.

To prove that $3$-local optima reach the desired approximation guarantee, we present an argument similar to that of Khanna et al. \cite{KMS98}. Let $S$ be a $3$-local optimum, and define $I:=S\cap \opt $. Since we cannot add a node outside of $S$ to get the bigger independent set (see Definition \ref{def:3_bit_optimality}), then every node in $S$ has at least one incoming edge from $V\setminus S$. Moreover, because of the local optimality of $S$, there are at most $\absl{S \setminus I}$ nodes in $\absl{ \opt \setminus I }$ that have exactly one edge coming from $S$. Thus, $\absl{ \opt } -\absl{S}$ nodes in $\opt  \setminus  I$ must have at least two outgoing edges to $S$. This implies that the minimum number of edges between $S$ and $\opt \setminus  I$ is  $\absl{ S} -\absl{ I} +2(\absl{ \opt } -\absl{ S} )$. If we denote with $E(S, \opt \setminus I)$ the set of all edges between $S$ and $\opt \setminus  I$, then it holds
\begin{equation}
\frac{2 \absl{\opt}}{\absl{S}} \leq \frac{ \absl{S} - \absl{I} +2(\absl{\opt} - \absl{S} )}{\absl{S}} \leq \frac{E(S, \opt \setminus I)}{\lvert S\rvert }\leq \frac{\sum_{v\in S}(\delta(v)+1)}{\absl{S}}. \label{eq5}
\end{equation}
From the definition of $3$-local optimum it follows that $S$ is a maximum independent set. Hence, we apply Corollary \ref{cor:mis} to \eqref{eq5} and conclude that $\absl{\opt}/\absl{S} \leq a b + 1/2 $, as claimed.
\end{proof}
%
%
\subsection{Multi-objective optimization.}
We also consider the multi-objective case for \is. As in the previous sections, we consider the bi-objective function $f = (\absl{x}_1, u(x))$, with $u(x)$ defined as 
\[
u(x) = - \sum_{i=1}^n x_i \sum_{j=1}^n x_j e_{ij},
\]
where $e_{ij}$ is $1$ if there is an edge between $v_i$ and $v_j$, and it is $0$ otherwise. The \gsemo attempts to maximize both objectives simultaneously. To perform the run time analysis of \gsemo, we use an argument similar to that of Halld\'{o}rsson and Radhakrishnan \cite{DBLP:journals/algorithmica/HalldorssonR97} for the greedy algorithm. 
\begin{theorem}
\label{thm:average_degree}
Let $G=(V, E)$ be a graph. For any node $v \in V$, denote with $N(v)$ the set of nodes adjacent to $v$. Let $S_k := \{v_1, \dots, v_k\}$ be a sequence of $k$ nodes s.t. $v_j$ is the node with smallest degree in the complete sub-graph induced by $V\setminus \left ( \bigcup_{i=1}^{j-1} (N(v_i)\cup v_i )\right )$, and with $\bigcup_{i=1}^{k-1} (N(v_i)\cup v_i ) = \emptyset $. Then $S_k$ is a maximum independent set. Furthermore, it holds
\[
\absl{S_k} \geq \absl{V} \left ( \frac{\sum_{v\in V}\delta(v)}{\absl{V}} + 1 \right )^{-1}.
\]
\end{theorem}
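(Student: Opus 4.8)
The plan is to prove the two assertions separately: first that $S_k$ is a (maximal) independent set, and then the quantitative lower bound on $\absl{S_k}$, which is a Caro--Wei type estimate realized by the minimum-degree greedy. For the structural claim I would argue directly from the construction. By definition $v_j$ is selected from the induced subgraph $G_j$ on $V \setminus \bigcup_{i < j}(N(v_i) \cup \{v_i\})$, so $v_j$ is adjacent to none of $v_1, \dots, v_{j-1}$; hence no two vertices of $S_k$ are adjacent and $S_k$ is independent. The stopping condition states that after $k$ steps every vertex has been removed, i.e.\ lies in some closed neighbourhood $N[v_i]$; thus every vertex of $V$ is either in $S_k$ or adjacent to a vertex of $S_k$, so $S_k$ is a \emph{maximal} (inclusion-wise) independent set, equivalently a dominating set. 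This is the sense in which the first claim should be read, since the minimum-degree rule does not produce a maximum independent set on general graphs.

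For the size bound, set $n := \absl{V}$, write $\bar{d} := \frac{1}{n}\sum_{v \in V}\delta(v)$ for the average degree of $G$, and let $d_j := \delta_{G_j}(v_j)$ denote the degree of $v_j$ inside the subgraph $G_j$ at the moment it is chosen. Each greedy step removes exactly the $d_j + 1$ vertices of the closed neighbourhood $N_{G_j}[v_j]$, and since the process terminates only when all vertices are gone, these closed neighbourhoods are disjoint and cover $V$. The crux is a charging argument: at step $j$ I would distribute one unit of charge uniformly over the $d_j + 1$ removed vertices, so that each receives $\frac{1}{d_j + 1}$. The minimum-degree selection is precisely what makes this charge large, since $v_j$ minimises the degree over $G_j$, for every vertex $u \in V(G_j)$ we have $d_j \leq \delta_{G_j}(u) \leq \delta(u)$, the last inequality because $G_j$ is a subgraph of $G$. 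Consequently every vertex $u$, removed at some unique step $j$, receives charge $\frac{1}{d_j + 1} \geq \frac{1}{\delta(u) + 1}$. Counting the total charge in two ways, it equals $k$ (one unit per step) and it is at least the sum over all vertices of their per-vertex guarantees, yields
\[
\absl{S_k} = k \geq \sum_{u \in V}\frac{1}{\delta(u) + 1}.
\]

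It then remains to convert this Caro--Wei sum into the stated average-degree bound, which is a routine application of convexity. The function $\phi(x) = \frac{1}{x + 1}$ is convex on $[0, \infty)$, so Jensen's inequality applied to the multiset of degrees $\{\delta(v)\}_{v \in V}$ gives
\[
\frac{1}{n}\sum_{v \in V}\frac{1}{\delta(v) + 1} \geq \frac{1}{\frac{1}{n}\sum_{v \in V}\delta(v) + 1} = \frac{1}{\bar{d} + 1}.
\]
Multiplying by $n$ and combining with the charging bound yields $\absl{S_k} \geq n/(\bar{d} + 1) = \absl{V}\bigl(\tfrac{\sum_{v \in V}\delta(v)}{\absl{V}} + 1\bigr)^{-1}$, as claimed. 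I expect the main obstacle to lie in the charging step rather than the convexity reduction: one must verify that the minimum-degree rule propagates to a per-vertex guarantee $\frac{1}{d_j+1} \ge \frac{1}{\delta(u)+1}$ for \emph{every} vertex of the removed neighbourhood (not merely for $v_j$ itself), which relies on the fact that degrees in the shrinking subgraphs never exceed degrees in $G$. Once that inequality is in place, the remaining passage through Jensen's inequality is immediate.
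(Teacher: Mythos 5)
Your proof is correct, but it takes a genuinely different route from the paper, because the paper contains no proof of this theorem at all: it simply defers to Halld\'{o}rsson and Radhakrishnan \cite[Theorem 1]{DBLP:journals/algorithmica/HalldorssonR97}, where the bound for the minimum-degree greedy is established by an inductive argument over the greedy steps. You instead give a self-contained argument: the closed neighbourhoods $N_{G_j}[v_j]$ removed at each step are pairwise disjoint and, by the stopping condition, cover $V$; the minimum-degree rule gives $d_j \leq \delta_{G_j}(u) \leq \delta(u)$ for every vertex $u$ removed at step $j$, so the charging scheme yields the Caro--Wei-type bound $\absl{S_k} \geq \sum_{v \in V} 1/(\delta(v)+1)$; and Jensen's inequality applied to the convex map $x \mapsto 1/(x+1)$ converts this into the stated average-degree bound. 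This buys two things over the paper's citation: your intermediate inequality is strictly stronger than the theorem's conclusion (the Caro--Wei sum dominates $n/(\bar d + 1)$), and the argument is elementary and checkable within the paper itself. You are also right to insist that ``maximum independent set'' in the statement can only be read as \emph{maximal}: the minimum-degree greedy is a polynomial-time procedure, so it cannot produce maximum independent sets on general graphs (and simple explicit counterexamples exist); what the paper actually uses later, in the proof of Theorem \ref{lem:mis-mo}, is precisely maximality (feasibility) together with the size bound, via $\absl{\opt} \leq n$, so your corrected reading is the one consistent with the theorem's role in the paper.
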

A proof of Theorem \ref{thm:average_degree} is given in Halld\'{o}rsson and Radhakrishnan \cite[Theorem 1]{DBLP:journals/algorithmica/HalldorssonR97}. 

We also consider a lemma that allows to obtain an upper-bound on the average degree of a \PLBU network $G$. The following lemma holds.
\begin{lemma}
\label{lemma:plb_sum}
Let $G=(V, E)$ be a \PLBU network with parameters $\beta > 2, t \geq 0$, and with universal constant $c_1$. Then it holds
\[
\sum_{v \in V} \delta (v) \leq 2 c_1 n (t + 1)^{\beta-1} \sum_{i = 1}^{\Delta(G)} i(i + t)^{-\beta}.
\]
\end{lemma}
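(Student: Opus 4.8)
The plan is to exploit the dyadic degree bins that appear in Definition~\ref{def:plbu}: I would partition the vertices according to which range $[2^d,2^{d+1})$ their degree falls into, bound the degree of each vertex in a bin by (twice) the summation index, apply the \PLBU count to each bin, and finally reassemble the resulting double sum into a single sum over the degree index $i$.

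Concretely, I would first discard the vertices of degree $0$, which contribute nothing to $\sum_{v\in V}\delta(v)$, and group the remaining vertices into the classes $C_d := \{v\in V : \delta(v)\in[2^d,2^{d+1})\}$ for $d\geq 0$. By Definition~\ref{def:plbu}, each class satisfies
\[
\absl{C_d} \leq c_1 n (t+1)^{\beta-1}\sum_{i=2^d}^{2^{d+1}-1}(i+t)^{-\beta}.
\]
Every vertex in $C_d$ has degree at most $2^{d+1}-1$, so the total degree contributed by $C_d$ is at most $(2^{d+1}-1)\absl{C_d}$. The key elementary step is the observation that for each index $i\in[2^d,2^{d+1})$ one has $2^{d+1}-1 < 2^{d+1}=2\cdot 2^d \leq 2i$; pulling the factor $2^{d+1}-1$ inside the sum and replacing it by $2i$ gives
\[
\sum_{v\in C_d}\delta(v) \leq (2^{d+1}-1)\absl{C_d} \leq 2 c_1 n (t+1)^{\beta-1}\sum_{i=2^d}^{2^{d+1}-1} i\,(i+t)^{-\beta}.
\]
This already displays the $i(i+t)^{-\beta}$ shape together with the factor $2$ of the claimed bound. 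Summing over $d\geq 0$ and using that the intervals $[2^d,2^{d+1})$ tile the integers $\geq 1$ exactly once, the nested sums telescope into a single sum over $i$, while all classes with $2^d>\Delta(G)$ are empty and drop out.

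The step I expect to require the most care is the bookkeeping at the top of the degree range. The dyadic class containing $\Delta(G)$ nominally runs up to $2^{\lfloor\log\Delta(G)\rfloor+1}-1$, which may strictly exceed $\Delta(G)$, so the naive decomposition produces a sum whose index runs slightly past $\Delta(G)$ (up to roughly $2\Delta(G)$). To land on the stated bound $\sum_{i=1}^{\Delta(G)}$ one must argue that only degrees actually realized by vertices of $G$ can contribute to the left-hand side, so the contribution of the top class can be capped using $\delta(v)\leq\Delta(G)$ and its index truncated at $\Delta(G)$; alternatively the surplus terms are harmless since $\sum_i i(i+t)^{-\beta}$ converges for $\beta>2$ and the overshoot costs only a constant factor. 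Everything else is a routine reindexing of the dyadic decomposition, and in particular convergence of the tail is not needed for the inequality itself, since the sum is finite.
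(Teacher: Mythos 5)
Your dyadic decomposition is sound as far as it goes, and it is the natural from-scratch argument (note that the paper itself does not prove Lemma~\ref{lemma:plb_sum}; it only points to Brach et al.~\cite{PLBNetworks}, whose more general statement is not truncated at the maximum degree). Grouping vertices into the classes $C_d$, bounding each degree in $C_d$ by $2^{d+1}-1$, applying Definition~\ref{def:plbu}, and replacing $2^{d+1}-1$ by $2i$ term-by-term correctly yields
\[
\sum_{v\in V}\delta(v)\;\leq\; 2c_1 n(t+1)^{\beta-1}\sum_{i=1}^{2^{d^*+1}-1} i(i+t)^{-\beta},
\qquad d^*:=\lfloor \log_2 \Delta(G)\rfloor .
\]
The genuine gap is exactly the step you flagged, and neither of your two repairs closes it. The first fix fails because, while you may cap the \emph{degrees} in the top class by $\Delta(G)$, the \PLBU bound on $\absl{C_{d^*}}$ is a sum over the \emph{entire} range $[2^{d^*},2^{d^*+1})$, and nothing permits truncating that sum at $\Delta(G)$; the class-wise inequality you would need,
\[
\Delta(G)\sum_{i=2^{d^*}}^{2^{d^*+1}-1}(i+t)^{-\beta}\;\leq\; 2\sum_{i=2^{d^*}}^{\Delta(G)} i(i+t)^{-\beta},
\]
is false: for $\Delta(G)=2^{d^*}=:D$ it reads $\sum_{i=D}^{2D-1}(i+t)^{-\beta}\leq 2(D+t)^{-\beta}$, whose left side has $D$ terms, each at least $(2D+t)^{-\beta}$, so it already fails for $t=0$ and $D>2^{\beta+1}$. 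The second fix (absorb the overshoot into the constant) proves the bound only with some constant strictly larger than $2$, i.e.\ a weaker statement than the one claimed.

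Moreover, no cleverer argument can close this gap, because under Definition~\ref{def:plbu} the statement with constant $2$ and the sum truncated at $\Delta(G)$ is simply false for large $t$. Take $\beta=3$, $c_1=1$, and let $G$ consist of disjoint copies of $K_2$, $K_4$ and $K_5$ (plus isolated vertices), with the numbers of degree-$1$, degree-$3$ and degree-$4$ vertices pushed up to the \PLBU caps of the classes $d=0,1,2$; then $\Delta(G)=4$. As $t\to\infty$ with $n\gg t$ (so the caps are large and divisibility is immaterial), all weights $(i+t)^{-3}$ equalize, the degree sum approaches $(1\cdot 1+3\cdot 2+4\cdot 4)\,c_1n(t+1)^{2}t^{-3}=23\,c_1n(t+1)^{2}t^{-3}$, while the claimed bound approaches $2(1+2+3+4)\,c_1n(t+1)^{2}t^{-3}=20\,c_1n(t+1)^{2}t^{-3}$, and $23>20$. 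What your derivation actually establishes, namely the displayed bound with upper limit $2^{d^*+1}-1$ (equivalently, with upper limit $n-1$, since the extra terms are positive), is the statement that should be made; it is also all the paper needs, because in the lemma's only application (Theorem~\ref{lem:mis-mo}) the sum is immediately dominated by the convergent integral $\int_{1}^{\infty}x(x+t)^{-\beta}\,dx$, for which the overshoot beyond $\Delta(G)$ is immaterial. So keep your argument, but state its conclusion with the honest upper limit rather than $\Delta(G)$.
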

A proof of Lemma \ref{lemma:plb_sum} is given in more general terms in Brach et al. \cite[Lemma 3.2]{PLBNetworks}. We use Theorem \ref{thm:average_degree} and Lemma \ref{lemma:plb_sum} to prove the following theorem.
\begin{theorem}
\label{lem:mis-mo}
Let $G=(V, E)$ be a connected \PLBU network with parameters $\beta>2$, $t\geq 0$ and with a universal constant $c_1$.  Then the \gsemo finds a $\left ( \frac{2 c_1 (\beta + t - 1)}{(\beta - 1)(\beta - 2)} + 1 \right )$-approximation of the \is after expected $\bigo{n^3}$ fitness evaluations. In particular, the \gsemo is a $\Theta(1)$-approximation algorithm for the \is on \PLBU networks.
\end{theorem}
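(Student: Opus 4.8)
The plan is to follow the two-phase template of Theorem~\ref{GSEMO_DS}, but to replace the greedy dominating-set argument with the minimum-degree greedy for independent sets of Theorem~\ref{thm:average_degree}, and then to convert the resulting average-degree guarantee into the stated ratio via Lemma~\ref{lemma:plb_sum}. The first ingredient is a population-size bound: since $\absl{x}_1$ is one of the two objectives and takes values in $\{0,\dots,n\}$, the front retains at most one individual per cardinality, so $\absl{P}\le n+1$ at all times. Consequently, selecting a prescribed individual has probability at least $1/(n+1)$ and flipping a single prescribed bit has probability at least $1/(en)$, so each targeted one-bit step is realised in expected $\bigo{n^2}$ fitness evaluations, exactly as in (Phase~1)/(Phase~2) of Theorem~\ref{GSEMO_DS}.

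For the runtime, I would mirror (Phase~1) of Theorem~\ref{GSEMO_DS} to reach a small feasible anchor and then, following Theorem~\ref{thm:average_degree}, have the \gsemo simulate the minimum-degree greedy sequence $v_1,\dots,v_k$: given the feasible prefix $S_j=\{v_1,\dots,v_j\}$ on the front, a single bit-flip inserts the next node $v_{j+1}$ (which is non-adjacent to $S_j$ by construction), producing a feasible independent set $S_{j+1}$ of size $j+1$. Since at most $n$ nodes are ever added and each addition takes expected $\bigo{n^2}$ evaluations, the \gsemo reaches the maximal set $S_k$ of Theorem~\ref{thm:average_degree} after expected $\bigo{n^3}$ evaluations.

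For the approximation ratio, Theorem~\ref{thm:average_degree} gives $\absl{S_k}\ge n(\bar d+1)^{-1}$, where $\bar d:=\tfrac1n\sum_{v\in V}\degr(v)$, while trivially $\absl{\opt}\le n$; hence $\absl{\opt}/\absl{S_k}\le \bar d+1$. It remains to bound $\bar d$. By Lemma~\ref{lemma:plb_sum}, estimating the unimodal summand by the corresponding integral (using $\beta>2$ for convergence, and writing $x=(x+t)-t$ to integrate),
\[
\bar d \;\le\; 2c_1(t+1)^{\beta-1}\sum_{i=1}^{\Delta(G)} i\,(i+t)^{-\beta} \;\le\; 2c_1(t+1)^{\beta-1}\!\int_{1}^{\infty}\! x\,(x+t)^{-\beta}\,dx \;=\; \frac{2c_1(\beta+t-1)}{(\beta-1)(\beta-2)},
\]
where the last equality follows because the integral evaluates to $\frac{\beta+t-1}{(\beta-1)(\beta-2)(1+t)^{\beta-1}}$ and the factor $(t+1)^{\beta-1}$ cancels. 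Combining the two bounds yields $\absl{\opt}/\absl{S_k}\le \frac{2c_1(\beta+t-1)}{(\beta-1)(\beta-2)}+1$, as claimed.

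The hard part will be making the greedy simulation rigorous. Unlike the dominating-set case, where the greedy prefixes $(u(S_j),j)$ form a Pareto antichain and are therefore all retained, here every feasible prefix has second coordinate $u=0$, so a larger feasible set \emph{dominates} a smaller one and the front keeps only the single largest independent set found so far. Thus I cannot simply assume the prefix $S_j$ survives between greedy steps: a mutation inserting a \emph{non}-greedy available node also produces a feasible set that replaces $S_j$, and since an arbitrary maximal independent set can be a factor $\Theta(n)$ smaller than the minimum-degree greedy one (as a star graph shows), the guarantee is genuinely sensitive to which node is added. Resolving this requires either arguing that the \gsemo still reconstructs the greedy trajectory within the claimed expected effort despite such replacements, or replacing the second objective by a quantity under which the greedy prefixes form an antichain (for instance the number of nodes dominated by the current set); this bookkeeping is the crux on which the correctness of the $\bigo{n^3}$ bound rests.
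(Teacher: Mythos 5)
Your proposal is the paper's proof: the same two-phase structure (reach $0^n$ in expected $\bigo{n^2 \log n}$ evaluations, then insert nodes one at a time at $\bigo{n^2}$ evaluations each), the same appeal to Theorem~\ref{thm:average_degree} for the minimum-degree greedy guarantee, and the same conversion of Lemma~\ref{lemma:plb_sum} into the constant $\frac{2 c_1 (\beta + t - 1)}{(\beta - 1)(\beta - 2)}$ via the identical integral estimate; your ratio computation matches the paper's line for line.

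The difficulty you flag at the end, however, is genuine --- and you should know that the paper does not resolve it either. The published proof simply asserts that the \gsemo ``iteratively adds nodes $v_j$ as in Theorem~\ref{thm:average_degree} to the current solution'' and thereby reaches a maximal independent set ``with fitness at least that of $S_k$'', without addressing the point you raise: since every independent set has second objective $u(x)=0$, the Pareto front retains only the single largest feasible solution found so far, so a mutation that inserts a non-greedy available node replaces the greedy prefix, and the resulting maximal set need not obey the Halld\'orsson--Radhakrishnan bound (your star-graph example is exactly the right witness that arbitrary maximal independent sets can be a factor $\Theta(n)$ worse). So your proposal is not weaker than the paper's argument; it is the same argument together with an honest acknowledgement of the step both leave open. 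Your suggested repair --- replacing or augmenting the second objective by a quantity (such as the number of dominated nodes) under which the greedy prefixes form a genuine antichain, as in the \ds analysis of Theorem~\ref{GSEMO_DS} --- is a sensible way to actually close that gap.
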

\begin{proof}
This argument is similar to the one given in Theorem~\ref{GSEMO_DS} and Theorem~\ref{GSEMO_CDS}. Following an argument as in (Phase 1) in Theorem \ref{GSEMO_DS}, and as in Theorem~\ref{GSEMO_CDS}, one can prove that the solution $0^n$ is added to the Pareto front after expected
\[
\sum_{j = 1}^{n}\left (\frac{j}{e(n+1)n}\right )^{-1} = \bigo{n^2 \log n}
\]
fitness evaluations. Note that the empty set is trivially an independent set. Once $0^n$ is reached, the \gsemo iteratively adds nodes $v_j$ as in Theorem \ref{thm:average_degree} to the current solution, until a maximum independent set is found. Again, since the population size is at most $n + 1$, then selecting a particular solution and adding a chosen node to it occurs after expected $\bigo{n^2}$ fitness evaluations. Hence, after expected $\bigo{n^3}$ fitness evaluations the \ea reaches a maximum independent set with fitness at least that of a set $S_k = \{v_1, \dots, v_k\}$, that consists of a sequence of $k$ nodes s.t. $v_j$ has minimum degree in the complete sub-graph induced by the nodes $V\setminus \left ( \bigcup_{i=1}^{j-1} (N(v_i)\cup v_i )\right )$, as in Lemma \ref{lemma:plb_sum}.

We conclude by proving that $S_k = \{v_1, \dots, v_k\}$ yields the desired approximation guarantee, using the \PLBU property. Let $d_i$ be the number of nodes adjacent the node $v_i$. From Lemma \ref{lemma:plb_sum} it holds
\begin{align}
\sum_{v \in V} \delta (v) & \leq 2 c_1 n (t + 1)^{\beta-1} \sum_{i = 1}^{\Delta(G)} i(i + t)^{-\beta} \nonumber\\
& \leq  2 c_1 n (t + 1)^{\beta-1}  \int_{1}^{\Delta(G)}x(x+t)^{-\beta}dx  \leq  \frac{2 c_1 n (\beta + t - 1)}{(\beta - 1)(\beta - 2)}. \nonumber 
\end{align}
We combine this chain of inequalities with Theorem \ref{thm:average_degree} to conclude that
\begin{equation}
\absl{S_k} \geq \absl{V} \left [ \frac{\sum_{v\in V}\delta(v)}{\absl{V}} + 1 \right ]^{-1} \geq n \left [ \frac{2 c_1 (\beta + t - 1)}{(\beta - 1)(\beta - 2)} + 1 \right ]^{-1}.\nonumber
\end{equation}
Hence, it follows that 
\begin{equation}
\frac{\absl{\opt}}{\absl{S_k}}\leq \frac{n}{\absl{S_k}}\leq \frac{2 c_1 (\beta + t - 1)}{(\beta - 1)(\beta - 2)} + 1,\nonumber
\end{equation}
and the claim follows.
\end{proof}
%
%
%
%
\section{Conclusion}
In this paper, we look at the approximation ratio and run time analysis of commonly studied evolutionary algorithms, for well known \NP-complete covering problems, on power-law bounded networks with exponent $\beta>2$ (see Definition \ref{def:plbu}). We consider the single-objective \ea (see Algorithm \ref{1+1}) and the multi-objective \gsemo (see Algorithm \ref{GSEMO}). 

We prove in Section \ref{sec:ds}, that the \ea reaches a constant-factor approximation ratio for the Minimum Dominating Set problem within $\bigo{n \log n}$ fitness evaluations on power-law bounded networks. Furthermore, we obtain and improved approximation guarantee for the \gsemo. In Section \ref{sec:ds}, we discuss similar bounds for the Minimum Vertex Cover problem. We show that the \ea and the \gsemo reach constant-factor approximation ratios in expected polynomial fitness evaluations for the Minimum Connected Dominating Set and Maximum Independent Set problems, in Section \ref{sec:MCDSs-objsec} and Section \ref{sec:mis}.

In all cases, we observe that the \ea and the  \gsemo reach a constant-factor approximation ratio in polynomial time. This suggests that EAs implicitly exploit the topology of real-world networks to reach a better solution quality than theoretically predicted in the general case.

\bibliography{bibliography}

\end{document}